\documentclass[11pt]{article}

\usepackage{fullpage}
\usepackage{amsmath}
\usepackage{amsthm,amsfonts,amssymb}
\usepackage{xspace}
\usepackage{bbm}
\usepackage{subcaption}
\usepackage{graphicx}
\usepackage{bbm}
\usepackage{hyperref}   
\usepackage{etoolbox}
\usepackage{enumitem}
\usepackage{authblk}

\usepackage[noend]{algorithmic}
\usepackage[ruled,vlined]{algorithm2e}
\usepackage{comment}
\usepackage[authoryear,square]{natbib}
\setcitestyle{square,comma}  

\usepackage{float}

	\hypersetup{
			colorlinks=true,
			linkcolor=[rgb]{0,0,.5},
			urlcolor=[rgb]{0,0,.5},
			citecolor=[rgb]{0,0,.5},
			pdfstartview=FitH}

\usepackage{accents}

\usepackage{graphicx} 
\usepackage{booktabs} 

\usepackage{caption}

\usepackage{xcolor}

\usepackage{pgfplots}

\pgfplotsset{compat=1.10}

\usepackage{tikz}

\usepackage{enumitem}
\usepackage{mathtools}
\usetikzlibrary{quotes,positioning}
\allowdisplaybreaks

\newtheorem{theorem}{Theorem}[section]
\newtheorem*{theorem*}{Theorem (Informal)}
\newtheorem{lemma}{Lemma}[section]
\newtheorem{corollary}{Corollary}[section]
\newtheorem{definition}{Definition}[section]
\newtheorem{remark}{Remark}[section]

\newtheorem{assumption}{Assumption}[section]

\newcommand{\supp}{\operatorname{supp}}
\newcommand{\R}{R^{(\mathcal{D}^i)}}
\newcommand{\RSwap}{R^{(\mathcal{D}^i)}_{\operatorname{Swap}}}

\newcounter{numrellocal}
\renewcommand{\thenumrellocal}{\arabic{numrellocal}}
\newcounter{numrelglobal}

\makeatletter
\newcommand{\numrel}[2]{
  \stepcounter{numrellocal}
  \refstepcounter{numrelglobal}
  \ltx@label{#2}
  \overset{(\thenumrellocal)}{#1}
}
\makeatother

\usepackage{xcolor}

\usepackage{thmtools} 
\usepackage{thm-restate}

\usepackage{amsmath}
\usepackage{pgfplots}

\title{Prior-Agnostic Incentive-Compatible Exploration}

\author{Ramya Ramalingam\thanks{Supported in part by the AWS AI ASSET Gift for Research in Trustworthy AI.}}
\author[1]{Osbert Bastani}
\author[1]{Aaron Roth}
\affil[1]{Department of Computer and Information Sciences, University of Pennsylvania}

\date{} %

\begin{document}
\maketitle

\begin{abstract}
In bandit settings, optimizing long-term regret metrics requires exploration, which corresponds to sometimes taking myopically sub-optimal actions. When a long-lived principal merely \emph{recommends} actions to be executed by a sequence of different agents (as in an online recommendation platform) this provides an incentive misalignment: exploration is ``worth it'' for the principal but not for the agents. Prior work studies regret minimization under the constraint of Bayesian Incentive-Compatibility in a static stochastic setting with a fixed and common prior shared amongst the agents and the algorithm designer. 

We show that (weighted) swap regret bounds on their own suffice to cause agents to faithfully follow forecasts in an approximate Bayes Nash equilibrium, even in dynamic environments in which agents have conflicting prior beliefs and the mechanism designer has no knowledge of any agents beliefs. To obtain these bounds, it is necessary to assume that the agents have some degree of uncertainty not just about the rewards, but about their arrival time --- i.e. their relative position in the sequence of agents served by the algorithm. We instantiate our abstract bounds with concrete algorithms for guaranteeing adaptive and weighted regret in bandit settings. 
\end{abstract}

\section{Introduction}
\label{sec:intro}
The central tension in learning with restricted feedback (i.e. ``bandit learning'') is managing \emph{exploration} and \emph{exploitation}. Informally, good learning algorithms must sometimes play actions that seem myopically sub-optimal in order to collect information that helps them guarantee approximate optimality in the long run. Exploration is therefore aligned with the goals of an algorithm designer who values long-term performance (measured by \emph{regret bounds}) despite its potential short-term sub-optimality. But this incentive alignment breaks when the actions of the learning algorithm are used as \emph{recommendations} for a sequence of agents who interact with the algorithm only briefly, as is the case for learning algorithms that are used to generate recommendations on online platforms. Although the algorithm designer still cares about long-term performance guarantees like regret, each agent cares myopically about the reward on the round at which they are served (since future reward does not accrue to them). Thus they may not be incentivized to follow the recommendations of the learning algorithm. This would be catastrophic to the learning algorithm since the data it uses to learn is obtained only through the actions of the agents it gives recommendations to. 

This incentive mismatch motivates the problem of Bayesian Incentive-Compatible Bandit Exploration \citep{wisdomofcrowd, MansourSlivkins_BIC}. In this model, actions are associated with stochastic rewards whose means are drawn from a prior distribution that is commonly known to  each of the agents and the algorithm designer. The goal is to design an algorithm that guarantees strong long-term regret bounds while simultaneously guaranteeing to each agent that following the recommendation that they are given is \emph{Bayesian Incentive Compatible} --- i.e. that it is their best response under their posterior beliefs, given that all previous agents also followed their recommended action. \citet{MansourSlivkins_BIC} show how to obtain strong regret bounds subject to the Bayesian Incentive Compatibility constraint in this model, subject to modest restrictions on the common prior distribution.

However, the information environment that \citep{wisdomofcrowd, MansourSlivkins_BIC} work in is restrictive. They assume that the environment is static (i.e. that it is a stochastic bandit instance), that all agents have a common prior over its parameters, and that this prior is shared by the algorithm designer who is able to use it in designing the algorithm. Even the ``detail-free'' variant of \cite{MansourSlivkins_BIC} requires some knowledge of the prior (the ordering of arms according to the prior mean and bounds on the mean rewards, and the assumption that the prior is independent across actions), and that this be consistent across agents. In this paper we aim to relax all of these assumptions, and to give guarantees for truly prior agnostic algorithms. In particular, we generalize the stochastic setting of prior work to allow for dynamically changing environments, we do not require that agents share the same prior, and the algorithms we analyze do not require \emph{any} knowledge of the prior of any of the agents (although we require that the agent priors satisfy (generalizations of) the same kinds of mild feasibility conditions as \cite{MansourSlivkins_BIC}). Since the guarantees we give will hold for any prior satisfying these feasibility conditions, we can thus also view the agents as being agnostic to the prior (believing only that it satisfies the feasibility conditions).

To obtain these generalizations, we introduce two relaxations. First, we study agents who do not have complete certainty over their exact position in the arrival sequence, but instead believe that their arrival position is drawn according to an agent-specific temporal prior distribution, which must be sufficiently dispersed. Second, our incentive compatibility guarantees will be that the policy of each agent faithfully following the recommended action will be an \emph{approximate} Bayes Nash Equilibrium. We view our assumption that agents have uncertainty about their arrival time as mild. We do  not require that agents are unaware of the ``wall clock time'' at the time they use the system --- what our model really requires is uncertainty about their \emph{relative position} in the sequence of agents who are served by the recommendation engine. Prior work gives guarantees (in more restrictive settings) even to agents who are certain that they are the 113,495'th customer to use the recommendation engine; we relax this to (roughly speaking\footnote{If the total number of agents is $T$, then our theorems can accommodate uncertainty that is (say) uniform over some sub-interval of length $\omega(\sqrt{T})$.}) customers who know their rough position but have uncertainty on the order of $\pm 500$. We are not the first to introduce temporal uncertainty on behalf of the agents; \citet{wisdomofcrowd} study a (simple two action, deterministically fixed reward) setting in which agents believe they are uniformly distributed over blocks of time. \citet{priceofic_slivkins} also discuss arrival-time uncertainty as a realistic assumption, though they do not make use of it. 

Our starting point is the observation that the incentive compatibility constraint and the standard guarantee of \emph{swap regret} \citep{foster1999regret,blum2007external} are identical, except that the swap regret guarantee is \emph{marginalized} over the sequence of rounds in which the algorithm makes each particular recommendation. Informally, incentive-compatibility requires that for each action $a$, conditional on an agent being recommended to play $a$, $a$ must be a best response in expectation over the posterior distribution. Swap regret requires that for each $a$, $a$ must be a best response on the empirical distribution on outcomes conditional on the algorithm recommending $a$. It follows that an algorithm that guarantees zero swap regret on every sequence (and hence in expectation over every prior) implies Bayesian Incentive Compatibility for every agent, regardless of their beliefs on action rewards, so long as they believe that their arrival time is uniformly random---because this aligns the marginal promise of the swap regret guarantee with the individual agent beliefs\footnote{\citet{wisdomofcrowd} make a similar observation that in their two-action setting, under uniform arrival beliefs, agent incentives align with a planner maximizing average utility. In stochastic settings (and in two action adversarial settings) swap regret and external regret coincide, but they are different in non-stochastic settings with $3$ or more actions.}. More generally, it is possible to define a \emph{weighted} notion of swap regret that weights each time step differently, rather than uniformly marginalizing over time. Correspondingly, any algorithm that guarantees no weighted swap regret, as weighted by the arrival-time belief distribution of an agent, promises Bayesian Incentive-Compatibility for that agent. 

The observation we have just sketched gives a way to promise incentive compatibility guarantees to many downstream agents, independently of their beliefs on rewards, so long as they have arrival time beliefs that are sufficiently diffuse to be compatible with weighted regret guarantees. But it seems to have traded detailed dependence on one prior distribution (the reward distribution) for detailed dependence on another (the arrival time distribution) in order to be able to promise weighted regret guarantees. The key is that there are algorithms that can simultaneously guarantee diminishing weighted regret guarantees with respect to many different weighting functions \citep{blum2007external,lee2022online,noarov2023high}. Such algorithms can promise simultaneous Bayesian Incentive Compatibility guarantees for agents who have arbitrarily different reward priors and arbitrarily different arrival time priors, so long as the algorithm guarantees no weighted regret with respect to each agent's arrival time prior. For example, there exist optimized algorithms for guaranteeing  \emph{adaptive regret} in the bandit setting (see e.g. \cite{luo_intervalreg}) at optimal rates. In our language, adaptive regret corresponds to simultaneous weighted regret guarantees under all arrival time priors that are uniform over any contiguous sub-sequence of length sufficiently large (nontrivial regret guarantees arise for subsequences of length $L = \tilde \omega(\sqrt{T})$). More generally, if a mechanism guarantees no regret with respect to some collection of weighting functions, it also guarantees no regret to any weighting function in the convex hull of the collection; thus algorithms that promise no adaptive swap regret suffice to give incentive compatibility guarantees with respect to all agents whose arrival time priors can be written as convex combinations of uniform distributions over sequences of at least some minimum length $L$. 

Of course, learning algorithms will not be able to promise \emph{zero} regret --- they generally promise average regret guarantees that tend to $0$ with the length of the interaction, and some dispersion parameter of the weighting function. Their guarantees \emph{conditional} on recommending an action $a$ (which is what incentive compatibility hinges on) therefore also depend on the frequency with which $a$ is recommended, with no guarantees at all corresponding to exceedingly infrequent conditioning events. This is what obligates us to place the same kinds of assumptions on the agents' (different) prior reward distributions as \citet{MansourSlivkins_BIC} do --- informally that Agents' reward priors must allow some non-negligible probability that each action might be optimal. Since we no longer require static instances (i.e. agents may believe that rewards change over time), this assumption now is applied marginally to their reward beliefs, as marginalized over their arrival time beliefs. This assumption, combined with a belief that reward means are only ``slowly moving'' across time suffices for agents to believe that the expected frequency of each action's recommendation must be bounded away from zero, which is what lets us control (their beliefs about) regret guarantees even conditional on the recommended action. We note that this slowly moving assumption is a strict generalization of prior work, which studied stochastic settings, in which agents believe that reward means are invariant across time. Finally, under this same slowly moving assumption, we observe that weighted swap regret guarantees follow (up to some approximation error) from correspondingly weighted external regret guarantees, which allow us to apply existing optimized results about bandit algorithms that can guarantee diminishing adaptive regret guarantees. Putting all of these pieces together gives us theorems guaranteeing that for \emph{any} algorithm satisfying adaptive regret guarantees, faithfully following the algorithm's recommendations is an approximate Bayes Nash Equilibrium for all agents who have prior beliefs on reward and arrival time satisfying several mild assumptions, without the need that these prior distributions be equal across agents, or known either to the other agents or to the mechanism designer. 

\section{Related Work}
The problem of Bayesian Incentive Compatible Exploration was introduced in the seminal works of \citep{wisdomofcrowd, MansourSlivkins_BIC} (see also \cite{che2018recommender}). \citet{MansourSlivkins_BIC} give (under an additional product distribution assumption on the prior) ``detail free'' algorithms that do not require the mechanism designer to have full knowledge of the (common) prior held by the agents, but only the ordering of the prior arm means and bounds on the arm rewards. \citet{priceofic_slivkins} show that Thompson Sampling (initialized on the prior distribution) becomes incentive compatible if given a sufficiently large ``warm start'', where the length of the required warm start depends on the instance. A large number of variants of the basic model have been studied. \citet{Mansour_BayesianGames} study a model in which multiple agents arrive at each round and the recommendation parameterizes a Bayesian game, with a focus on exploring all ``explorable arms'' \citet{immorlica2019bayesian} study a model in which different agents have different preferences, with the aim of learning the optimal personalized recommendation policy. \citet{bahar2015economic,bahar2019social} overlay the model with a network in which agents can observe each others actions according to the network structure. \citet{immorlica2018incentivizing} make behavioral assumptions that let them relax assumptions of full rationality and commitment power. The model has also been extended to contextual and combinatorial bandit settings \citep{sellke2023incentivizing,CombinatorialBandit_BIC} and reinforcement learning settings \citep{simchowitz2024exploration}. Our primary point of departure from this literature is that we do not assume that the underlying environment is stochastic, do not require agents to agree in their beliefs, and do not require any knowledge of agent beliefs on the part of the mechanism designer. 
Rather than viewing regret guarantees as an objective to obtain under separate incentive compatibility constraints, we show that regret guarantees induce incentive compatibility constraints (under appropriate agent uncertainty on arrival time). 

A line of work has studied when \emph{greedy algorithms} (which do not balance exploration with exploitation, but only exploit) can nevertheless obtain strong regret bounds \citep{bastani2021mostly,kannan2018smoothed,raghavan2023greedy}. This kind of algorithm minimizes the strategic tension with myopic agents but requires strong assumptions on the underlying learning problem. 

Incentivizing exploration is an information design problem that can be viewed as a multi-round generalization of Bayesian Persuasion \citep{kamenica2011bayesian}. Several recent papers have given methods to remove Bayesian assumptions from Bayesian Persuasion models using calibration, which is closely linked to swap regret \citep{camara2020mechanisms,collina2024efficient}. These papers operate in a different setting, and deal with a single long lived agent, and so avoid the tension between the goals of a long lived principal and myopic agents which are the core of the incentivizing exploration problem. Our work is also conceptually related to the problem of ``predicting for downstream regret'' which aims to make forecasts that induce regret bounds in downstream agents with different objectives when they best respond to them \citep{noarov2023high,kleinberg2023u,roth2024forecasting,hu2024predict,lu2025sample}. This literature works in a full information setting and so does not have to grapple with how incentives interact with partial observation.

\section{Background and Notation}
We study a sequential decision-making setting where a learning algorithm interacts with different agents over each of $T$ rounds. In each round $t \in [T]$:
\begin{enumerate}
    \item A new agent arrives and receives a recommendation $I_t \in A$ from the algorithm, where $A$ is a fixed set of $K$ actions/arms.
    \item The environment selects a reward vector $u_t \in [0,1]^K$ independently of $I_t$, where $u_{t,a}$ is the reward of action $a$ at time $t$.
    \item The agent chooses an action $a_t \in A$, possibly deviating from the recommendation.
    \item The algorithm observes the played action $a_t$ and the corresponding reward $u_t(a_t)$. The agent receives utility equal to $u_t(a_t)$.
\end{enumerate}
The environment can be viewed as an adaptive adversary, and need not follow any stochastic process. The algorithms we study will have regret guarantees even against such adversaries. We index agents by $i \in [T]$ to disambiguate from time, indexed by $t \in [T]$. Each incoming agent $i$ reasons about the environment and the algorithm's recommendation through her own subjective beliefs. Firstly, she believes that rewards are generated according to a model specified by her prior $\mathcal{P}^i$,  under which the reward for playing action $a$ on day $t$ is drawn independently (but not necessarily identically) each day $t$ from a distribution with mean $\mu_{t,a}$. That is, each agent maintains a belief over a sequence of mean reward vectors $\mu = (\mu_1, \mu_2, \cdots, \mu_T)$ that captures the expected reward for playing any action on any day. We adopt this model of beliefs to accommodate both stationary and dynamic reward environments within a single framework. For an agent with stationary beliefs, $\mu_t$ is the same across all $t \in [T]$, but we also allow for agents who believe that rewards can change over time. Note that agent beliefs need not be correct --- this allows us to give guarantees in non-stationary settings in which agents have different beliefs.

\begin{definition}[Reward Belief]
A \emph{reward belief} $\mathcal{P}$ is a distribution over sequences of reward mean vectors $(\mu_1, \mu_2, \ldots, \mu_T)$, where each $\mu_t = (\mu_{t,a})_{a \in A} \in [0,1]^K$ defines the expected rewards for all $K$ actions at time $t$. An instance $\mu \sim \mathcal{P}$ models the reward for playing action $a \in A$ on day $t \in [T]$ as being drawn from a distribution with mean $\mu_{t,a}$. 
\end{definition}

Agent $i$ is also uncertain about her arrival time. She is aware of the overall protocol --- that the algorithm makes recommendations to one agent per round over $T$ rounds --- but does not observe her true arrival time $t$ prior to making a decision. We denote by $\tau_i$ the random variable that represents agent $i$'s arrival-time. Since she receives $I_t$, but is not aware of what $t$ is, we denote by $I_{\tau_i}$ the random variable that represents agent $i$'s belief over her recommendation. Further, her belief on the distribution of $I_{\tau_i}$ depends on how the algorithm behaves in earlier rounds, which in turn depends on who arrived before her and how they acted. To make these  expectations well-defined, we allow agents to hold beliefs over the \textit{entire} arrival sequence: 

\begin{definition}[Full Arrival Belief, Temporal Belief]
A \emph{full arrival belief} $\mathcal{T}$ is a joint distribution over the full arrival-time profile $\tau = (\tau_1, \tau_2, \cdots, \tau_T)$, with support over permutations, where $\tau_i$ is the random variable denoting when agent $i$ arrives. For a specific agent $i \in [T]$, define their \emph{temporal belief} $\mathcal{D}^i$ as the marginal distribution of $\tau_i$ induced by their full arrival belief $\mathcal{T}^i$, i.e. $\mathcal{D}^i(t) = \Pr_{\tau \sim \mathcal{T}^i}(\tau_i = t)$. 
\end{definition}
As is implicit in our notation, we assume that agent arrival time beliefs are independent of their reward beliefs. 
Note that the true sequence of rewards need not be drawn from any distribution defining an agent's beliefs, and that we allow different agents to have mutually incompatible beliefs. Likewise, an agent's temporal belief may not reflect their true arrival time --- it is even possible that it lies outside the support of their belief $\mathcal{D}$. We view agent uncertainty about their arrival time as a mild assumption. We do not require that agents are unaware of the ``wall-clock'' time at the moment that they are served---the uncertainty we model is uncertainty about an agents' relative position in the sequence of agents served by the recommendation algorithm; that is, the assumption is that agents are not certain about the precise number of other agents that have used the recommendation engine before them.  In order to keep track of the true sequence of events in real time, which is separate from agent beliefs, we define a transcript that captures the sequence of recommendations made by an algorithm, as well as agent outcomes and realized rewards:

\begin{definition}[Transcript]
A \textit{transcript} $\Pi_T = \{(I_t, a_t, u_{t})\}_{t=1}^T$ records the sequence of recommendations made by an algorithm, as well as the actual actions taken at each time-step and realized reward vectors.
\end{definition}

Though we record the full reward vector at each day, note that this is a bandit setting and the algorithm observes only $u_t(a_t)$. Given an algorithm and sequence of play, we can talk about the algorithm's performance in terms of \textit{regret}. There are a few different kinds of regret that will be relevant in this paper:

\begin{definition}[Weighted external regret]
\label{def:realized-regret}
Fix a transcript $\Pi_T = \{(I_t, a_t, u_t)\}_{t=1}^T$ of recommendations made by algorithm $\mathcal{A}$. For any temporal prior $\mathcal{D}$, the \emph{$\mathcal{D}$-weighted realized (external) regret} is
\begin{equation*}
{\mathrm{Reg}}_{\mathcal{D}}(\Pi_T)
\;:=\;
\max_{a\in A}\;
\sum_{t=1}^T \mathcal D(t)\,\bigl(u_{t,a}-u_{t,I_t}\bigr).
\end{equation*}
\end{definition}

\begin{definition}[Weighted swap-regret]
\label{def:swap-regret}
Fix a transcript $\Pi_T = \{(I_t, a_t, u_t)\}_{t=1}^T$ of recommendations made by algorithm $\mathcal{A}$. For any temporal prior $\mathcal{D}$, the \emph{$\mathcal{D}$-weighted swap-regret} is:
\begin{equation*}
\mathrm{SReg}_{\mathcal{D}}(\Pi_T)
\;:=\;
\max_{\phi\in \Phi}\;
\sum_{t=1}^T \mathcal{D}(t)\,\bigl(u_{t,\phi(I_t)}-u_{t,I_t}\bigr).
\end{equation*}
where $\Phi$ is the set of all swap-functions $\phi: A \to A$.
\end{definition}

These two notions of regret are defined in terms of realized rewards, and are (weighted versions of) standard metrics in online learning, and have to do with the algorithm's objective performance. However, when an agent decides what action to take, their reasoning is based not on the true rewards but their beliefs over them --- more specifically, the distributions from which rewards are drawn. To capture this perspective, we also consider a belief-based notion of regret defined with respect to an agent's reward belief:  

\begin{definition}[Weighted pseudo-regret, swap-regret]
\label{def:pseudo-regret}
Fix a transcript $\Pi_T = \{(I_t, a_t, u_t)\}_{t=1}^T$ of recommendations made by algorithm $\mathcal{A}$. For any temporal prior $\mathcal{D}$, and any \textit{realized} sequence of mean vectors $\mu = (\mu_1, \mu_2, \cdots, \mu_T)$, the \emph{$\mathcal{D}$-weighted pseudo-regret} with respect to reward belief instance $\mu$ is
\begin{equation*}
\mathrm{PReg}_{\mathcal{D}}(\Pi_T, \mu)
\;:=\;
\max_{a\in A}\;
\sum_{t=1}^T \mathcal{D}(t)\,\bigl(\mu_{t,a}-\mu_{t,I_t}\bigr).
\end{equation*}
Similarly, the \emph{$\mathcal{D}$-weighted pseudo swap-regret} with respect to $\mu$ is
\begin{equation*}
\mathrm{PSReg}_{\mathcal{D}}(\Pi_T, \mu)
\;:=\;
\max_{\phi\in \Phi}\;
\sum_{t=1}^T \mathcal{D}(t)\,\bigl(\mu_{t,\phi(I_t)}-\mu_{t,I_t}\bigr).
\end{equation*}
\end{definition}

Pseudoregret is a standard metric of regret used in settings with stochastic rewards --- here, we use it as a way to model agent beliefs over \textit{what they think} regret looks like, under the assumption that rewards are drawn from distributions that align with their beliefs. 

Having established our framework, what is a reasonable incentive-compatibility goal to ask for? Prior work  \citep{wisdomofcrowd,MansourSlivkins_BIC} proposes the property of Bayesian Incentive-Compatibility: 

\begin{definition}[Bayesian Incentive Compatibility]
An algorithm is \textit{Bayesian incentive-compatible} (BIC) for a sequence of agents with beliefs $\{(\mathcal{P}^i, \mathcal{T}^i)\}_{i=1}^T$ where for all $i \in [T]$, $\mathcal{D}^i(i) = 1$ and $0$ everywhere else, if
\begin{equation*}
    \mathbb{E}[\mu_{\tau_i,a} | I_{\tau_i} = a, \mathcal{P}^{i}, \mathcal{D}^{i}, \mathcal{E}_{i-1}] \geq \mathbb{E}[\mu_{\tau_i,b} | I_{\tau_i} = a, \mathcal{P}^{i}, \mathcal{D}^i, \mathcal{E}_{i-1}]
\end{equation*}
for all pairs of actions $b \neq a$ (for which the conditioning event has non-zero probability), and all $i \in [T]$, where $\mathcal{E}_{i-1}$ is the event that all agents prior to agent $i$ agreed to take their recommended action. 
\end{definition}
That is, given their priors,  the recommendation itself, and the fact that all previous agents followed their recommendations, each agent believes that the recommended action maximizes their expected reward. The above definition can be thought of as representing implementation in iterative elimination of dominated strategies, as it states that playing the recommended action is a dominant strategy \emph{if all prior players} have done the same. This is sensible for a model in which player arrival times are known with certainty, which corresponds to a sequential extensive form game. When player arrival times are uncertain, we instead are modeling a simultaneous move game. The actions in this game are mappings from recommendations to actions, and our solution concept will be Bayes Nash equilibrium.

\begin{definition}[Recommendation Game]
Fix an algorithm $\mathcal{A}$, and a collection of agent beliefs $\{(\mathcal{P}^i, \mathcal{T}^i)\}_{i=1}^T$. The algorithm induces a Bayesian game among the $T$ agents of the following form: 

At each round $t$, the algorithm issues a (possibly random) output $I_t$ to the agent arriving that round (call them $i$). Agent $i$ with beliefs $(\mathcal{P}^i, \mathcal{T}^i)$ privately observes their recommendation $I_{\tau_i}$ and then chooses an action via a \textit{strategy} $\sigma_i$. The strategy space for agent $i$ is $S_i = A^A$, the set of all functions $\sigma_i: A \to A$ mapping recommendations to actions. Agent $i$'s expected payoff from a strategy profile $\sigma = (\sigma_1, \sigma_2, \cdots, \sigma_T)$ over all agents is:
\begin{equation*}
    U_i(\sigma) = \mathbb{E}[\mu_{\tau_i,\sigma_i(I_{\tau_i})} | \sigma_{-i}]
\end{equation*}
where conditioning on $\sigma_{-i}$ is shorthand for taking the expectation over the distribution of transcripts induced by agent $i$'s prior and the strategy profiles $\sigma_{-i}$ for the other agents, and their expected payoff conditioned upon observing recommendation $a \in A$ is:
\begin{equation*}
U_i(\sigma | a) = \mathbb{E}[\mu_{\tau_i, \sigma_i(I_{\tau_i})} | I_{\tau_i} = a, \sigma_{-i}].
\end{equation*}
\end{definition}

\begin{remark}
The conditional expectation defined above may in general be very complicated to analyze. Forming a posterior about $\mu_{\tau_i}$ requires agent $i$ to reason not only about the reward distributions at their unknown arrival time, but also the reward distributions and actions of the agents who arrived at all rounds that preceded them --- because the algorithm's recommendation distribution is a function of the prefix of the transcript that it has seen so far. For general strategy profiles $\sigma$, which actions are taken (as a function of recommendations made at prior rounds) depend on the identity of who arrives before, and when prior agents do not follow the recommendations of the algorithm, this may disrupt regret guarantees the algorithm might have otherwise had. 

However these expectations become much easier to analyze when agents follow the \textit{compliant} strategy profile $\sigma^* = (\sigma_1^*, \sigma_2^*, \cdots, \sigma_T^*)$, where $\sigma_i^*$ is the identity function for all $i \in [T]$. When agents do this, the distribution on recommendations that the algorithm makes at day $t$ no longer depends on the identity or order of the agents that arrived previously, because they all play the same compliant strategy. In this setting algorithms that promise worst-case regret guarantees (when their recommendations are in fact played) will enjoy these guarantees for every realized sequence. In this situation, in order for agent $i$ to compute her posterior, what is relevant is not her full arrival belief $\mathcal{T}^i$, but only her marginal temporal belief $\mathcal{D}^i$ that models her belief over her own arrival-time and not that of others. Our goal in this work is to show that this simple vector of compliant strategies is an approximate Bayes Nash equilibrium, and hence in our main results and analysis, we only need to deal with (and assume agents maintain) $\mathcal{D}^i$, marginal arrival beliefs; the joint belief $\mathcal{T}^i$ is defined only to make the game well defined in other strategy profiles.
\end{remark}

\begin{definition}[Incentive-Compatible Bayes Nash Equilibrium] An algorithm implements an \emph{Incentive-Compatible Bayes Nash Equilibrium} (IC-BNE) for a set of agents with beliefs $\{(\mathcal{P}^i, \mathcal{D}^i)\}_{i=1}^T$ if, in the recommendation game, the strategy profile $\sigma^*$ (each agent follows their recommended action) is a Bayes Nash equilibrium. That is, for every agent $i \in [T]$ every alternative strategy $\sigma_i': A \to A$, and every action $a \in A$ (for which the conditioning event has non-zero probability),
\begin{equation*}
    \mathbb{E}[\mu_{\tau_i, \sigma_i^*(I_{\tau_i})} | I_{\tau_i} = a, \sigma_{-i}^*] \geq \mathbb{E}[\mu_{\tau_i,\sigma_i'(I_{\tau_i})} |  I_{\tau_i} = a, \sigma_{-i}^*]
\end{equation*}
Similarly, $\mathcal{A}$ implements an \textit{approximately} Incentive-Compatible Bayes $\epsilon$-Nash Equilibrium if
\begin{equation*}
    \mathbb{E}[\mu_{\tau_i, \sigma_i^*(I_{\tau_i})} | I_{\tau_i} = a, \sigma_{-i}^*] \geq \mathbb{E}[\mu_{\tau_i,\sigma_i'(I_{\tau_i})} |  I_{\tau_i} = a, \sigma_{-i}^*] - \epsilon
\end{equation*}
\end{definition}

\section{Regret Gives Approximate Incentive-Compatibility}
\subsection{Preliminaries}
At a high level, both regret and incentive-compatibility impose similar constraints. Exactly incentive-compatible recommendations must each be optimal in expectation, conditional on the agent's available information --- this is a point-wise requirement. A swap regret bound of zero is a morally similar constraint, marginalized over time --- it requires that on average,  the sequence of recommended actions performs as well as any fixed alternative, even conditional on the recommendation. Thus, swap regret controls the extent to which incentive constraints can be violated in aggregate as marginalized over time. 

The central focus of this section will be to bridge the gap between the marginal guarantees of regret and the pointwise incentive constraints required for equilibrium. When an agent knows their exact position in sequence, of course a marginal guarantee need not imply anything about the optimality of their specific recommendation. However, when agents are uncertain about arrival time, their individual incentive constraints are also marginalized over time, as weighted by their belief about their arrival distribution. So, if we can provide appropriately weighted regret bounds these should automatically guarantee approximate incentive compatibility.

Real algorithms will not obtain zero swap regret, but will be able to obtain diminishing weighted swap regret for sufficiently diffuse arrival beliefs, which can be made arbitrarily small with a long enough time horizon. This motivates our focus on approximate incentive-compatibility, where the approximation error is a function of worst case regret bounds. We show that in our adversarial setting, under a ``slowly moving'' assumption, algorithms achieving sublinear belief-weighted external regret imply belief-weighted swap regret and implement incentive-compatible Bayes $\epsilon$-Nash equilibria. Since this is defined relative to agent beliefs, any such result necessarily depends on mild conditions over those beliefs, similar to those required in prior work \citep{MansourSlivkins_BIC}. At a high level, we need every action to be \textit{explorable}: informally, the agent must regard each action as plausibly optimal under some realization, as otherwise no recommendation scheme could persuade them to play that action.

\begin{assumption}
    \label{ass:adversarial1}
    Under both reward and temporal prior beliefs $\mathcal{P}$ and $\mathcal{D}$, we have:
    \begin{equation*}
        \pi_a := \mathbb{P}_{\mu \sim \mathcal{P}}(\text{gap}_{\mu, \mathcal{D}}(a) \geq \Delta) \geq \alpha
    \end{equation*}
    for some $\alpha, \Delta > 0$ and for each action $a \in A$, where $\text{gap}_{\mu, \mathcal{D}}(a) := \min_{b \neq a}\mathbb{E}_{\mathcal{D}}[\mu_{t, a} - \mu_{t,b}]$ is the expected margin by which action $a$ outperforms any other fixed action as marginalized over the arrival distribution.
\end{assumption}

Intuitively, we require that, under an agent's beliefs, each action has a non-negligible probability of outperforming all other actions by some positive margin. This condition is consistent with other conditions specified across the BIC literature. \cite{MansourSlivkins_BIC} state a sufficient condition almost identical to ours in the stochastic setting, requiring:
\begin{equation}
    \mathbb{P}[X_i^k > \tau_{\mathcal{P}}]\geq \rho_{\mathcal{P}} \tag{P1}
\end{equation}
for some positive constants $(\tau_{\mathcal{P}}, \rho_{\mathcal{P}})$, where $X_i^k$ is the expected advantage of playing action $i$ over any other action $j$, conditioned on the algorithm having observed $k$ realized rewards of every action that is a priori preferable to action $i$; \cite{CombinatorialBandit_BIC} require the same condition in the combinatorial semi-bandits setting, where actions are some subset of atomic arms. Assumption \ref{ass:adversarial1} is similar while being tailored to dynamic rewards and arrival-time uncertainty. Rather than conditioning on a certain number of observed samples, this ``observation uncertainty'' is folded into the agent's temporal prior $\mathcal{D}$, as they cannot be sure how many samples have been seen before their recommendation.

\citet{priceofic_slivkins} prove that a \textit{pairwise non-dominance} condition is both necessary and sufficient for an action $i$ to be explorable: 
\begin{equation*}
    \mathbb{P}[\mu_j < \mathbb{E}[\mu_i]] > 0 \tag{P2}
\end{equation*}
for all $j \neq i$. Qualitatively, this is a useful condition to understand the nature of what priors are amenable to persuasion, and is a weaker condition than ours; however, we note that their finite-sample guarantees and bounds on regret are given in terms of parameters that quantify the ``gap'' in expected performance in the same way our parameters $(\alpha, \Delta)$ do. 

We will also assume that agents believe that the rewards are ``slowly changing'' with time.
\begin{assumption}
    \label{ass:slowmoving}
    Under reward prior $\mathcal{P}$, for each realization $\mu \sim \mathcal{P}$ in the support of $\mathcal{P}$, we have:
    \begin{equation*}
        ||\mu_{t+1} - \mu_t||_{\infty} \leq \rho
    \end{equation*}
    for each $t \in [T-1]$.
\end{assumption}
Prior work studies stochastic/stationary settings in which $\rho = 0$, and so this assumption is a strict generalization. 

\subsection{Main Theorems}
Recall that incentive-compatibility requires controlling each agent's expected gain from deviating from their recommended action conditional on being given that recommendation. In contrast, external regret bounds give us bounds on these deviations on average. The first key observation is that deviations conditioned on a fixed recommendation are naturally captured by \textit{swap regret}, which gives bounds not just in aggregate over the full sequence of recommendations, but over the subsequences of rounds corresponding to a fixed recommendation. This makes swap-regret naturally aligned with incentive constraints. 

The proof linking swap regret to incentive-compatibility has two main steps. First, we use swap-regret to bound the agent's expected gain from from deviating to an alternate action (unconditioned on their recommendation). Second, under Assumptions \ref{ass:adversarial1} and \ref{ass:slowmoving}, we show that an \textit{external regret} bound (which is directly implied by swap-regret) makes the agent \textit{believe} that each action is recommended with non-negligible probability. These two lemmas together let us bound the agent's \textit{conditional} expected gain for deviating. Our bounds also depend on quantitative measures of the agent's uncertainty about arrival time:
\begin{definition}[Max Dispersion]
The \emph{max dispersion} of a temporal prior $\mathcal{D} \in \Delta([T])$ is
\begin{equation*}
\Psi_{\max}(\mathcal D):=\max_{t\in I}\Psi_{\mathcal D}(t),
\end{equation*}
where we define the one-sided dispersion $\Psi_{\mathcal{D}}: [T] \to \mathbb{R}$ as $\Psi_{\mathcal{D}}(t) := \mathbb{E}_{s\sim\mathcal{D}}\left[\,|t-s|\,\right]$, and $I$ is the smallest contiguous set enclosing all points in the support of $\mathcal{D}$, i.e. $$I = \{t \in [T]: \min(\supp(\mathcal{D})) \leq t \leq \max(\supp(\mathcal{D}))\}$$
\end{definition}

\begin{definition}[Mean Dispersion]
The \emph{mean dispersion} of a temporal prior $\mathcal{D}$ is
\[
\Phi(\mathcal{D}) := \mathbb{E}_{s,t \sim \mathcal{D}}[|t-s|]
\]
\end{definition}

Both of these metrics capture some form of variance / how uncertain an agent is about arrival time. Intuitively, when an agent's arrival-time belief is highly concentrated, aggregate performance guarantees provide little information about the optimality of their specific recommendation. When the belief is sufficiently spread out, however, these guarantees become more informative. We formalize this notion using the above parameters, which appear naturally in the analysis of bounding an agent's belief over the probability of any specific action being recommended to them.  With these definitions in place, we can now state our main theorems informally: 

\begin{theorem*}[Swap-Regret$\implies$IC-BNE]
 Fix a collection of agents with reward and temporal beliefs $\{(\mathcal{P}^i, \mathcal{D}^i)\}_{i=1}^T$ that all satisfy Assumptions \ref{ass:adversarial1} and \ref{ass:slowmoving} with parameters $(\alpha, \Delta, \rho)$. For any distribution $\mathcal{D}$, define the concentration term $c(\mathcal{D}) := \sqrt{||\mathcal{D}||_2^2\ln(K)}$, and the dispersion term $d(\mathcal{D})$, where $d$ is either the function $\Psi_{\text{max}}$ or $\Phi$. Fix an algorithm $\mathcal{A}$ guaranteeing expected $\mathcal{D}^i$-weighted swap regret $r(\mathcal{D}^i)$ for all $i \in [T]$, If the effective gap dominates error terms:
 \begin{equation*}
     \Delta > r(\mathcal{D}^i) + c(\mathcal{D}^i) + \rho\cdot d(\mathcal{D}^i)
 \end{equation*}
 then $\mathcal{A}$ implements an incentive-compatible $\epsilon$-Bayes Nash equilibrium, where the approximation parameter for agent $i$ is
 \begin{equation*}
     \epsilon_i = O\left(\frac{r(\mathcal{D}^i)}{\alpha(\Delta - r(\mathcal{D}^i) - c(\mathcal{D}^i) - \rho\cdot d(\mathcal{D}^i))}\right)
 \end{equation*}
\end{theorem*}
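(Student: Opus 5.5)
The plan follows the two-step outline given just before the statement. Fix an agent $i$ with beliefs $(\mathcal{P}^i,\mathcal{D}^i)$ and assume all other agents play $\sigma^*_{-i}$. Since every agent then plays the identity strategy, the algorithm's recommendation distribution at each round $t$ does not depend on who arrives. Hence, conditional on the realized instance $\mu$ and the algorithm's randomness, the transcript $\Pi_T$ is well defined. Agent $i$'s arrival time $\tau_i\sim\mathcal{D}^i$ is independent of $(\mu,\Pi_T)$.

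\textbf{Step 1 (unconditional deviation gain).} For any $\sigma_i'$ I would write
\[
\sum_a \Pr[I_{\tau_i}=a]\bigl(U_i(\sigma'_i|a)-U_i(\sigma^*|a)\bigr)=\mathbb{E}_{\mu,\Pi_T}\Bigl[\sum_t \mathcal{D}^i(t)\bigl(\mu_{t,\sigma'_i(I_t)}-\mu_{t,I_t}\bigr)\Bigr]\le \mathbb{E}[\mathrm{PSReg}_{\mathcal{D}^i}(\Pi_T,\mu)].
\]
Since $u_t$ is drawn independently with mean $\mu_t$ and $I_t$ is chosen before $u_t$, the expected pseudo swap-regret for a fixed $\phi$ equals the expected realized swap-regret for that $\phi$. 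The expected realized swap-regret is at most $r(\mathcal{D}^i)$. The maximum over $\phi$ inside the expectation costs the usual concentration term: $\sum_t\mathcal{D}^i(t)(u_{t,b}-\mu_{t,b})$ is a weighted sum of bounded martingale differences, so Azuma/Hoeffding together with a union bound over $K^2$ pairs (restricting $\phi$ to pairwise deviations per recommended action) gives an additive $O(c(\mathcal{D}^i))$. Applying this to the swap $\phi$ that deviates only on recommendation $a$ yields $\Pr[I_{\tau_i}=a]\cdot(\text{conditional gain given }a)\le r+O(c)$. The per-action conditional gain is therefore at most $(r+c)/\Pr[I_{\tau_i}=a]$.

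\textbf{Step 2 (lower bound on $\Pr[I_{\tau_i}=a]$).} This is the main obstacle. On the event $G_a=\{\mathrm{gap}_{\mu,\mathcal{D}^i}(a)\ge\Delta\}$, which has probability at least $\alpha$, I will show that the $\mathcal{D}^i$-weighted frequency $p_a(\mu,\Pi)=\sum_t\mathcal{D}^i(t)\mathbb{1}[I_t=a]$ is large. Write $\bar\mu_b=\mathbb{E}_{s\sim\mathcal{D}^i}\mu_{s,b}$. Assumption \ref{ass:slowmoving} gives $|\mu_{t,b}-\bar\mu_b|\le\rho\,\Psi_{\mathcal{D}^i}(t)$, so for $t$ in the hull $I$ every round is within $\rho\Psi_{\max}$ of the averaged instance. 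Then
\[
\sum_t\mathcal{D}^i(t)(\mu_{t,a}-\mu_{t,I_t})\ge (1-p_a)\Delta-\rho\,d(\mathcal{D}^i).
\]
For $\Phi$, averaging $|\mu_{t,b}-\bar\mu_b|$ against $\mathcal{D}^i$ gives a $\rho\Phi$ bound in place of $\rho\Psi_{\max}$. The left-hand side is at most the pseudo external regret, whose expectation is at most $r+O(c)$ by Step 1's reasoning, since external regret is bounded by swap regret. I would like to conclude $p_a\ge 1-(r+c+\rho d)/\Delta$ on $G_a$, but the regret bound only holds in expectation. The plan is instead to apply the bound to $\mathbb{E}[\cdot\,\mathbb{1}_{G_a}]$, using that the algorithm's worst-case guarantee holds for every fixed $\mu$, hence conditionally on $G_a$. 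This gives $\mathbb{E}[p_a\mid G_a]\ge(\Delta-r-c-\rho d)/\Delta$. Consequently
\[
\Pr[I_{\tau_i}=a]=\mathbb{E}[p_a]\ge\alpha(\Delta-r-c-\rho d)/\Delta,
\]
which is positive under the hypothesis of the theorem.

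\textbf{Combining.} Dividing the Step 1 bound by the Step 2 bound gives $\epsilon_i=O\bigl((r+c)\Delta/(\alpha(\Delta-r-c-\rho d))\bigr)$. Absorbing $c$ into $r$ and $\Delta\le 1$ into the constant gives the stated form. The delicate points are making the concentration term and the conditioning on $G_a$ rigorous in the adaptive adversary setting. If the conditional regret bound cannot be justified, the fallback is a Markov-type argument on the regret random variable.
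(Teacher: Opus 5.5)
Your architecture matches the paper's proof of Theorem \ref{thm:BIC-swap}. You write the conditional deviation gain as $\mathbb{E}[(\mu_{\tau_i,b}-\mu_{\tau_i,a})\mathbf{1}[I_{\tau_i}=a]]/\Pr(I_{\tau_i}=a)$. You bound the numerator through the single swap $a\mapsto b$, and you lower-bound the denominator on the gap event using the slow-moving estimate and the regret guarantee. Your Step 2 is essentially Lemmas \ref{lem:prob-bound-max1} and \ref{lem:prob-bound-max2}. Conditioning on $G_a$ is justified exactly as you say: the guarantee holds for every fixed $\mu$, which is what the paper uses, so no Markov fallback is needed.

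There is one step that does not deliver the stated bound. In Step 1 you route through $\mathbb{E}[\mathrm{PSReg}_{\mathcal{D}^i}]$, pay $O(c(\mathcal{D}^i))$ for the maximum over $\phi$, and at the end ``absorb $c$ into $r$.'' That absorption is invalid. $c(\mathcal{D}^i)=\sqrt{\|\mathcal{D}^i\|_2^2\ln K}$ depends only on the agent's belief, not on the algorithm, and can be much larger than $r(\mathcal{D}^i)$. For example, with zero swap regret the statement gives $\epsilon_i=0$ (exact incentive compatibility), while your argument gives only $O(c/(\alpha(\Delta-r-c-\rho d)))$. A numerator of $r$ alone, rather than $r+c+\rho d$, is precisely what separates this theorem from its external-regret counterpart, Theorem \ref{thm:BIC-ext}.

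The fix is already in your own text. For a \emph{fixed} swap function $\phi$ (here $\phi(a)=b$, identity elsewhere), $I_t$ is determined by the history before round $t$ and $u_t$ has conditional mean $\mu_t$. Hence $\mathbb{E}[\sum_t\mathcal{D}^i(t)(\mu_{t,\phi(I_t)}-\mu_{t,I_t})]=\mathbb{E}[\sum_t\mathcal{D}^i(t)(u_{t,\phi(I_t)}-u_{t,I_t})]\le\mathbb{E}[\mathrm{SReg}_{\mathcal{D}^i}]\le r$. There is no maximum inside the expectation, so no Azuma or union bound is needed; this is Lemma \ref{lem:pseudo-from-swap}. With numerator $r$ and your denominator $\alpha(\Delta-r-c-\rho d)/\Delta$, you get $\epsilon_i\le r\Delta/(\alpha(\Delta-r-c-\rho d))$. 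This is the stated form, since $\Delta\le 1$.

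Two smaller remarks:
\begin{itemize}
\item The same fixed-comparator observation, applied in Step 2 with the constant swap $\phi\equiv a$, would also remove $c$ from the denominator. The paper keeps $c$ there only because it routes through Lemma \ref{lem:pseudo-from-realized}.
\item The slow-moving step gives $2\rho\Psi_{\mathcal{D}^i}(t)$, not $\rho\Psi_{\mathcal{D}^i}(t)$, because both $\mu_{t,a}$ and $\mu_{t,b}$ drift. This changes only constants.
\end{itemize}
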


This theorem gives a direct relationship between our notion of incentive-compatibility through approximate Nash equilibria and belief-weighted swap-regret, and importantly doesn't require any assumptions on the true realization of rewards over time (agents simply need to believe that rewards are slow-moving). However, swap-regret guarantees are more difficult to obtain efficiently, particularly in bandit settings. In slowly-moving, approximately stochastic environments, however, external regret closely approximates swap regret. Thus, when agents \textit{believe} they have sufficiently slowly-moving reward sequences, we can replace swap regret with the more tractable notion of external regret, enabling a broader class of algorithms to satisfy the guarantees that imply incentive-compatibility. 

\begin{theorem*}[External Regret$\implies$IC-BNE]
 Fix a collection of agents with reward and temporal beliefs $\{(\mathcal{P}^i, \mathcal{D}^i)\}_{i=1}^T$ that all satisfy Assumptions \ref{ass:adversarial1} and \ref{ass:slowmoving} with parameters $(\alpha, \Delta, \rho)$. Define concentration and dispersion terms $c(\mathcal{D})$ and $d(\mathcal{D})$ as in the previous theorem. Consider an algorithm $\mathcal{A}$ that guarantees expected $\mathcal{D}^i$-weighted external regret $r(\mathcal{D}^i)$ for all $i \in [T]$, If the effective gap dominates these error terms:
 \begin{equation*}
     \Delta > r(\mathcal{D}^i) + c(\mathcal{D}^i) + \rho\cdot d(\mathcal{D}^i)
 \end{equation*}
 then $\mathcal{A}$ implements an incentive-compatible $\epsilon$-Bayes Nash equilibrium, where the error term for agent $i$ is approximately:
 \begin{equation*}
     \epsilon_i = O\left(\frac{r(\mathcal{D}^i) + c(\mathcal{D}^i) + \rho\cdot d(\mathcal{D}^i)}{\alpha(\Delta - r(\mathcal{D}^i) - c(\mathcal{D}^i) - \rho\cdot d(\mathcal{D}^i))}\right)
 \end{equation*}
\end{theorem*}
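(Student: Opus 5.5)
The plan is to reduce this statement to the preceding Swap-Regret$\implies$IC-BNE theorem. I will show that, under the agent's beliefs (Assumption~\ref{ass:slowmoving}), an expected $\mathcal{D}^i$-weighted external regret bound $r(\mathcal{D}^i)$ implies an expected $\mathcal{D}^i$-weighted \emph{pseudo} swap regret bound of order $r' := r(\mathcal{D}^i) + c(\mathcal{D}^i) + \rho\cdot d(\mathcal{D}^i)$. The previous theorem then applies with $r$ replaced by $r'$. The numerator of $\epsilon_i$ becomes $r'$, while the denominator $\Delta - r - c - \rho d$ is unchanged, because it only depends on the external regret via the ``each action is recommended with non-negligible probability'' lemma, and external regret is all that lemma uses.

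The first step passes from realized to belief-based regret. Conditional on an instance $\mu\sim\mathcal{P}^i$ and compliant play, $u_{t,a}-\mu_{t,a}$ is a martingale-difference sequence; indeed, $I_t$ is chosen before $u_t$ and $u_t$ is independent of $I_t$. So for each fixed $a$, $\sum_t \mathcal{D}(t)(u_{t,a}-\mu_{t,a})$ and the analogous term with $I_t$ have sub-Gaussian tails with variance proxy $\|\mathcal{D}\|_2^2$. A union bound (or a maximal inequality) over the $K$ comparators then gives $\mathbb{E}[\mathrm{PReg}_{\mathcal{D}}] \le r(\mathcal{D}) + O(\sqrt{\|\mathcal{D}\|_2^2\ln K}) = r + O(c(\mathcal{D}))$. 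This is where the concentration term enters.

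The second step converts pseudo-external regret to pseudo-swap regret using slow movement. Fix a realization $\mu$. For any $\phi$, write the swap gain as $\sum_a\sum_{t:I_t=a}\mathcal{D}(t)(\mu_{t,\phi(a)}-\mu_{t,a})$. Replace each $\mu_t$ by the $\mathcal{D}$-averaged vector $\bar\mu := \mathbb{E}_{s\sim\mathcal{D}}[\mu_s]$. Since $\|\mu_t-\mu_s\|_\infty\le\rho|t-s|$, we get $\|\mu_t-\bar\mu\|_\infty \le \rho\Psi_{\mathcal{D}}(t)$, and hence the replacement error is at most $2\rho\sum_t\mathcal{D}(t)\Psi_{\mathcal{D}}(t)=2\rho\Phi(\mathcal{D})\le 2\rho\Psi_{\max}(\mathcal{D})$. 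After this replacement the problem is stationary. The swap gain is $\sum_a w_a(\bar\mu_{\phi(a)}-\bar\mu_a)\le \sum_a w_a(\max_b\bar\mu_b-\bar\mu_a)$, where $w_a=\sum_{t:I_t=a}\mathcal{D}(t)$. This equals the external regret against $a^*=\arg\max\bar\mu$, again evaluated on $\bar\mu$. Swapping $\bar\mu$ back to $\mu_t$ costs another $2\rho\Phi(\mathcal{D})$. So $\mathrm{PSReg}_{\mathcal{D}}\le \mathrm{PReg}_{\mathcal{D}} + 4\rho\, d(\mathcal{D})$ pointwise, and taking expectations yields the bound $r'$.

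Finally, I will plug $r'$ into the argument of the Swap-Regret theorem. There, the unconditional deviation gain for the swap $\phi=\sigma_i'$ equals the expected $\mathcal{D}^i$-weighted pseudo swap gain (because $\tau_i\sim\mathcal{D}^i$ is independent of the transcript under $\sigma^*$). Dividing by $\Pr[I_{\tau_i}=a]\gtrsim\alpha(\Delta - r - c - \rho d)$ gives the stated $\epsilon_i$. The main obstacle is the first step. The swap theorem presumably controls belief-based swap regret, but the regret guarantee here is on realized rewards in an adaptive setting, so the concentration must hold uniformly over comparators and conditional on $\mu$. Using a one-sided maximal Azuma bound over $K$ actions should suffice, but the dependence of $I_t$ on past rewards must be handled carefully through the filtration.
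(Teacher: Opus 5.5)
Your proposal is correct and follows the paper's proof essentially step for step. The numerator $\mathbb{E}[(\mu_{\tau_i,b}-\mu_{\tau_i,a})\mathbf{1}[I_{\tau_i}=a]]$ is bounded by expected pseudo swap regret, then by pseudo external regret plus a $\rho\Phi(\mathcal{D}^i)$ term via slow movement (Lemma~\ref{lem:ext-to-swap}; your $\bar\mu$-replacement gives constant $4$ rather than the paper's $2$, which is immaterial), then by realized regret plus the Azuma concentration term (Lemma~\ref{lem:pseudo-from-realized}). The denominator comes from Lemmas~\ref{lem:prob-bound-max1} and~\ref{lem:prob-bound-max2}, which use only external regret.

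One wording fix: in the numerator, use the swap that sends $a\mapsto b=\sigma_i'(a)$ and fixes every other action, not $\sigma_i'$ itself. Dividing the unconditional gain of a general $\sigma_i'$ by $\Pr(I_{\tau_i}=a)$ does not isolate the conditional gain at $a$; the bound by pseudo swap regret still holds because that quantity maximizes over all $\phi$.
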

Notice that in the stationary environment studied in previous work $(\rho = 0)$, in both theorems the dispersion terms vanish, and achieving small $\epsilon$ is (approximately) close to achieving small regret (as long as arrival-time belief is not too tightly concentrated anywhere). That is, regret minimization directly serves as an imperative for approximate obedience. 

\subsection{Auxiliary Lemmas}
\label{subsec:lemmas}
We now state and prove several auxiliary lemmas used in the proofs for our main theorems. We begin with a simple inequality that bounds the per-round expected difference between playing two actions to the expected gap over a temporal belief:

\begin{restatable}{lemma}{lemmaone}
\label{lem:ubound-general}
Fix a sequence of reward mean vectors $\mu = (\mu_1, \mu_2, \cdots, \mu_T)$ that satisfies $||\mu_{t+1} - \mu_t||_{\infty} \leq \rho$ for each $t \in [T-1]$. Then for any two actions $a, b \in A$, any time-step $t \in [T]$, and any temporal belief $\mathcal{D} \in \Delta([T])$, 
\begin{equation*}
    \mu_{t,a} - \mu_{t,b}
    \;\ge\;
    \mathrm{gap}_{\mu, \mathcal{D}}(a)\;-\;2\rho\,\Psi_{\mathcal{D}}(t).
\end{equation*}
\end{restatable}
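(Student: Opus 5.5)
The plan is to compare the pointwise difference $\mu_{t,a}-\mu_{t,b}$ with the $\mathcal{D}$-averaged difference $\mathbb{E}_{s\sim\mathcal{D}}[\mu_{s,a}-\mu_{s,b}]$. The slowly-moving condition controls how far the reward means can drift between time $t$ and a random time $s$.

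First, fix $a,b$ and $t$, and write $f(s):=\mu_{s,a}-\mu_{s,b}$. By the triangle inequality and the assumption $\|\mu_{s+1}-\mu_s\|_\infty\le\rho$, telescoping along the path from $s$ to $t$ gives $|\mu_{t,c}-\mu_{s,c}|\le\rho|t-s|$ for every action $c$. Applying this to $c=a$ and $c=b$ yields the Lipschitz bound
\begin{equation*}
|f(t)-f(s)|\le 2\rho|t-s| \quad\text{for all } s,t\in[T].
\end{equation*}

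Second, average over $s\sim\mathcal{D}$. Since $f(t)\ge f(s)-2\rho|t-s|$ holds pointwise, taking expectations gives
\begin{equation*}
f(t)\ge \mathbb{E}_{s\sim\mathcal{D}}[f(s)]-2\rho\,\mathbb{E}_{s\sim\mathcal{D}}|t-s| = \mathbb{E}_{\mathcal{D}}[\mu_{s,a}-\mu_{s,b}]-2\rho\,\Psi_{\mathcal{D}}(t).
\end{equation*}
Finally, recall that $\mathrm{gap}_{\mu,\mathcal{D}}(a)=\min_{b'\ne a}\mathbb{E}_{\mathcal{D}}[\mu_{s,a}-\mu_{s,b'}]$. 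For $b\neq a$ this minimum is at most $\mathbb{E}_{\mathcal{D}}[\mu_{s,a}-\mu_{s,b}]$, which gives the claimed inequality.

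There is one subtlety, and it is the only step needing care. When $b=a$ the left side of the lemma is $0$, so the claim needs $\mathrm{gap}_{\mu,\mathcal{D}}(a)\le 2\rho\Psi_{\mathcal{D}}(t)$, which does not hold in general when the gap is positive. I will therefore read the statement as intended for $b\neq a$, which is the only case used later, and note this explicitly. Otherwise the argument is a direct telescoping estimate with no real obstacle. The definition of $\Psi_{\mathcal{D}}(t)$ is stated for arbitrary $t\in[T]$, so no restriction to the enclosing interval $I$ is needed here.
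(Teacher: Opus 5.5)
Your proof is correct and follows essentially the same route as the paper: the paper bounds $\mu_{t,a}$ from below and $\mu_{t,b}$ from above by their $\mathcal{D}$-averages, each up to $\rho\Psi_{\mathcal{D}}(t)$, and subtracts, which is your $2\rho$-Lipschitz averaging argument for $f(s)=\mu_{s,a}-\mu_{s,b}$ written in two halves. Your caveat that $b\neq a$ is needed is also right: the paper's final step $\mathbb{E}_{s\sim\mathcal{D}}[\mu_{s,a}-\mu_{s,b}]\ge \mathrm{gap}_{\mu,\mathcal{D}}(a)$ uses it implicitly, and the lemma is only ever applied with $b\neq a$.
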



The proof is straightforward and we relegate it to Appendix \ref{sec:proofs}. Next, we relate the guarantee of $\mathcal{D}$-weighted \textit{realized} regret (the quantity controlled for in standard adversarial bandit settings) to the $\mathcal{D}$-weighted \textit{pseudo-regret} guarantee that is relevant from the agent's perspective under their belief model. When agents evaluate recommendations, they are doing so using their expected rewards, whereas regret bounds are typically stated in terms of realized rewards. The following lemma shows that under the standard model, expected realized regret also gives us a bound on expected pseudo-regret. 

\begin{lemma}
\label{lem:pseudo-from-realized}
Fix an agent with temporal belief $\mathcal{D}$. Let $K:=|A|$ and define $W_2(\mathcal D):=\sum_{t=1}^T \mathcal{D}(t)^2$. For any realized mean-reward vector $\mu$, let $\{\tilde{u}_t\}_{t=1}^T$ be a reward sequence generated according to $\mu$, and let $\tilde{\Pi}_T = \{(\tilde{I}_t, \tilde{a}_t, \tilde{u}_t)\}_{t=1}^T$ be the \emph{hypothetical} transcript generated by running an algorithm $\mathcal{A}$ with this reward sequence. Then, for any confidence parameter $\delta \in(0,1)$, with probability at least $1-\delta$,
\begin{equation*}
\mathrm{PReg}_{\mathcal D}(\tilde{\Pi}_T, \mu) \;\le\;
\mathrm{Reg}_{\mathcal D}(\tilde{\Pi}_T) \;+\;
2\sqrt{2\,W_2(\mathcal D)\,\ln\!\Bigl(\frac{2K}{\delta}\Bigr)}.
\end{equation*}
and consequently,
\begin{equation*}
\mathbb{E}\bigl[\mathrm{PReg}_{\mathcal D}(\tilde{\Pi}_T, \mu)\bigr] \;\le\;
\mathbb{E}\bigl[\mathrm{Reg}_{\mathcal D}(\tilde{\Pi}_T) \bigr] \;+\;
2\sqrt{2\,W_2(\mathcal D)\,\ln(2K)}.
\end{equation*}
where expectation is taken over the algorithm's randomness.
\end{lemma}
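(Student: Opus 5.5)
We prove the high-probability bound by comparing, for each fixed action, the $\mathcal{D}$-weighted sum of realized rewards with the corresponding sum of means, and then controlling the deviation with Azuma--Hoeffding plus a union bound. Let $\mathcal{F}_{t-1}$ be the history up to the start of round $t$, together with the algorithm's internal randomness used to select $\tilde I_t$. By the protocol, $\tilde u_t$ is drawn independently of $\tilde I_t$ with conditional mean $\mu_t$. Hence for every fixed $c\in A$,
\[
X_t^{(c)}:=\mathcal{D}(t)\bigl(\tilde u_{t,c}-\mu_{t,c}\bigr), \qquad Y_t:=\mathcal{D}(t)\bigl(\mu_{t,\tilde I_t}-\tilde u_{t,\tilde I_t}\bigr)
\]
are martingale difference sequences with respect to $(\mathcal{F}_t)$. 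Since $\tilde I_t$ is $\mathcal{F}_{t-1}$-measurable and rewards lie in $[0,1]$, each increment lies in an interval of length $\mathcal{D}(t)$.

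We decompose the pseudo-regret against any fixed $a$:
\[
\sum_t \mathcal{D}(t)\bigl(\mu_{t,a}-\mu_{t,\tilde I_t}\bigr)=\sum_t \mathcal{D}(t)\bigl(\tilde u_{t,a}-\tilde u_{t,\tilde I_t}\bigr)-\sum_t X^{(a)}_t-\sum_t Y_t .
\]
The first term is at most $\mathrm{Reg}_{\mathcal D}(\tilde\Pi_T)$. Azuma--Hoeffding with range $\mathcal{D}(t)$ gives
\[
\Pr\Bigl[\,\bigl|\textstyle\sum_t X_t^{(a)}\bigr|\ge s\Bigr]\le 2\exp\bigl(-2s^2/W_2(\mathcal{D})\bigr),
\]
and the same bound holds for $\sum_t Y_t$. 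We take a union bound over the $K$ sequences $X^{(a)}$ together with the one sequence $Y$, allocating failure probability $\delta/(2K)$ to each. This yields, simultaneously for all $a$, each deviation at most $\sqrt{W_2\ln(4K/\delta)/2}$. With bookkeeping at the looser constant, each term is at most $\sqrt{2W_2\ln(2K/\delta)}$, so their sum is at most $2\sqrt{2W_2\ln(2K/\delta)}$. Since the bound holds for every $a$, we can take the max over $a$ on the left and obtain the first claim. The stated constant has enough slack for this; at worst we use one-sided bounds, which suffice here.

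For the expectation bound, note that pseudo-regret minus realized regret is at most $Z:=\max_a\bigl|\sum_t X_t^{(a)}\bigr|+\bigl|\sum_t Y_t\bigr|$. We bound $\mathbb{E}[Z]$ directly. For the max of $K$ sub-Gaussian martingales with variance proxy $W_2/4$, the exponential-moment (soft-max) argument gives
\[
\mathbb{E}\max_a\Bigl|\textstyle\sum_t X^{(a)}_t\Bigr|\le\sqrt{W_2\ln(2K)/2},
\]
and $\mathbb{E}\bigl|\sum_t Y_t\bigr|\le\sqrt{W_2}/2$ by Jensen's inequality. Both are dominated by $\sqrt{2W_2\ln(2K)}$, so their sum is at most $2\sqrt{2W_2\ln(2K)}$, as claimed. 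As an alternative, we could integrate the high-probability tail bound over $\delta$.

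The only delicate point is the martingale setup. The comparator term $\tilde u_{t,\tilde I_t}$ uses the realized reward of an adaptively chosen arm, so we must check that $\tilde u_t$ is conditionally independent of $\tilde I_t$ given the past with mean $\mu_t$; this is exactly what the model's "independently of $I_t$" stipulation provides. Once that is verified, the remaining steps are routine constant-tracking.
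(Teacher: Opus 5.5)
Your proof is correct and follows the paper's skeleton: write pseudo-regret against each comparator $a$ as realized regret minus a zero-mean weighted noise sum, apply Azuma, and union bound over actions. It differs from the paper in the decomposition and, more importantly, in the expectation step.

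\textbf{Decomposition.} The paper keeps one martingale per action, $M_t(a)=\sum_{s\le t}\mathcal{D}(s)\bigl((\tilde u_{s,a}-\mu_{s,a})-(\tilde u_{s,\tilde I_s}-\mu_{s,\tilde I_s})\bigr)$, with increments bounded by $2\mathcal{D}(t)$. Azuma plus a union bound over $K$ actions then gives exactly $2\sqrt{2W_2(\mathcal D)\ln(2K/\delta)}$, with no slack. You instead split off the shared played-arm noise $\sum_t Y_t$ and apply range-$\mathcal{D}(t)$ Hoeffding--Azuma to $K+1$ sequences. This gives the sharper $\sqrt{2W_2\ln(4K/\delta)}$ before you loosen it.

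\textbf{Expectation step.} The paper integrates its tail bound. But $\int_0^\infty\min\{1,2Ke^{-x^2/(8W_2)}\}\,dx$ is strictly larger than $2\sqrt{2W_2\ln(2K)}$: the integrand equals $1$ on all of $[0,2\sqrt{2W_2\ln(2K)}]$ and is positive beyond. Taken literally, that step yields the stated constant plus an additive $O(\sqrt{W_2/\ln(2K)})$. Your two bounds together give $\sqrt{W_2\ln(2K)/2}+\sqrt{W_2}/2$, which genuinely lies below $2\sqrt{2W_2\ln(2K)}$:
\begin{itemize}
\item the exponential-moment (soft-max) bound over the $2K$ signed comparator martingales;
\item the second-moment bound $\mathbb{E}|\sum_t Y_t|\le\sqrt{W_2}/2$.
\end{itemize}
So your route is the one that actually proves the expectation claim with the stated constant.

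\textbf{Your fallback.} For the same reason, be careful with integrating the high-probability bound over $\delta$. Integrating the loosened bound $2\sqrt{2W_2\ln(2K/\delta)}$ overshoots, since $\ln(2K/\delta)>\ln(2K)$ for every $\delta<1$. It works only if you integrate your sharper $\sqrt{2W_2\ln(4K/\delta)}$ and apply Jensen, which gives $\sqrt{2W_2(\ln(4K)+1)}\le 2\sqrt{2W_2\ln(2K)}$.

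\textbf{Filtration.} Putting the algorithm's internal randomness into $\mathcal{F}_{t-1}$ is the right way to make $\tilde I_t$ predictable for a randomized algorithm. The paper asserts this measurability from the history alone, which is slightly loose.
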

\begin{proof}
In this proof, we are considering how the agent \textit{reasons} about realized regret based on their reward beliefs. To that end, we first fix a realization $\mu$ of mean-reward vectors (that parametrize reward distributions) and consider a reward sequence $\{\tilde{u}_t\}_{t=1}^T$ generated according to $\mu$. Note that this sequence may be \textit{completely unrelated} to the true sequence of rewards recorded by the transcript --- these represent rewards as modeled by the agent. Then, the algorithm $\mathcal{A}$ generates a (random) transcript $\tilde{\Pi}_T = \{(\tilde{I}_t, \tilde{a}_t, \tilde{u}_t)\}_{t=1}^T$ according to these modeled rewards. For all $t \in [T]$, let $\mathcal{F}_{t-1}$ be the $\sigma$-field generated by the history up to time $t-1$ (when running $\mathcal{A}$ on the simulated reward sequence $\{\tilde{u}_t\}_{t=1}^T$). Fix any $a \in A$ and define the random variable:
\begin{equation*}
M_t(a) \;:=\; \sum_{s=1}^t \mathcal{D}(s)\Bigl(\bigl(\tilde{u}_{s,a}-\mu_{s,a}\bigr)-\bigl(\tilde{u}_{s,\tilde{I}_s}-\mu_{s,\tilde{I}_s}\bigr)\Bigr).
\end{equation*}
for each $t \in [0,T]$. Since rewards are bounded in $[0,1]$, we have
\begin{equation*}
\bigl|M_t(a)-M_{t-1}(a)\bigr| \;\le\; \mathcal{D}(t)\bigl(|\tilde{u}_{t,a}-\mu_{t,a}|+|\tilde{u}_{t,\tilde{I}_t}-\mu_{t,\tilde{I}_t}|\bigr) \;\le\; 2\mathcal{D}(t).
\end{equation*}

For a fixed $\mu$, $\tilde{u}_t$ is independent of $\mathcal{F}_{t-1}$, and so $\mathbb{E}[\tilde{u}_{t, a} - \mu_{t, a} |\mathcal{F}_{t-1}] = 0$. Moreover, because $\tilde{I}_t$ is $\mathcal F_{t-1}$-measurable, $\mathbb{E}[\tilde{u}_{t, \tilde{I}_t} - \mu_{t, \tilde{I}_t} |\mathcal{F}_{t-1}] = 0$ , and so,
\begin{equation*}
\mathbb{E}\!\left[M_t(a)-M_{t-1}(a)\mid \mathcal F_{t-1}\right]=0,
\end{equation*}
which verifies that $\{M_t(a)\}_{t=0}^T$ is a martingale. By Azuma's inequality, for any $x>0$,
\begin{equation*}
\mathbb{P}\bigl(|M_T(a)|\ge x\bigr) \;\le\;
2\exp\!\left(-\frac{x^2}{8\sum_{t=1}^T \mathcal{D}(t)^2}\right) \;=\;
2\exp\!\left(-\frac{x^2}{8W_2(\mathcal D)}\right).
\end{equation*}
Applying a union bound over $a\in  A$, we get that with probability at least $1-\delta$,
\begin{equation*}
\max_{a\in A}|M_T(a)| \;\le\; 2\sqrt{2\,W_2(\mathcal D)\,\ln\!\Bigl(\frac{2K}{\delta}\Bigr)}.
\end{equation*}
By definition,
\begin{equation*}
\sum_{t=1}^T \mathcal{D}(t)\bigl(\mu_{t,a}-\mu_{t,\tilde{I}_t}\bigr) = \sum_{t=1}^T \mathcal{D}(t)\bigl(\tilde{u}_{t,a}-\tilde{u}_{t,\tilde{I}_t}\bigr) - M_T(a).
\end{equation*}
and taking the max over all $a$:
\begin{equation}
\label{eq:pseudoreg}
\mathrm{PReg}_{\mathcal D}(\mu) \;\le\; \mathrm{Reg}_{\mathcal D}(\tilde{\Pi}_T)+\max_{a}|M_T(a)| \leq \mathrm{Reg}_{\mathcal D}(\tilde{\Pi}_T) +  2\sqrt{2\,W_2(\mathcal D)\,\ln\!\Bigl(\frac{2K}{\delta}\Bigr)}. 
\end{equation}
with probability at least $1 - \delta$. For the expectation bound, we use the tail-sum formula: 
\begin{align*}
\mathbb{E}\!\left[\max_{a}|M_T(a)|\right] &\;=\;
\int_0^\infty \mathbb{P}\!\left(\max_{a}|M_T(a)|\ge x\right)\,dx \\
&\;\le\;
\int_0^\infty \min\!\left\{1,\;2K\exp\!\left(-\frac{x^2}{8W_2(\mathcal D)}\right)\right\}\,dx
\;\le\;
2\sqrt{2\,W_2(\mathcal D)\,\ln(2K)}.
\end{align*}
Taking an expectation over (\ref{eq:pseudoreg}) completes the proof. 
\end{proof}

Next, we relate swap-regret to pseudoregret. 
\begin{restatable}{lemma}{lemmatwo}
\label{lem:pseudo-from-swap}
    Fix an agent with temporal belief $\mathcal{D}$, and a realized mean-reward vector $\mu = (\mu_1, \cdots, \mu_T)$. Let $\{\tilde{u}_t\}_{t=1}^T$ be a (random) reward sequence generated according to $\mu$, and let $\tilde{\Pi}_T = \{(\tilde{I}_t, \tilde{a}_t, \tilde{u}_t)\}_{t=1}^T$ be the \emph{hypothetical} transcript generated by running an algorithm $\mathcal{A}$ with this reward sequence. Then, for any fixed swap function $\phi: A \to A$,
    \begin{equation*}
        \mathbb{E}\left[\sum_{t=1}^T \mathcal{D}(t) (\mu_{t, \phi(\tilde{I}_t)} - \mu_{t, \tilde{I}_t})\right] =         \mathbb{E}\left[\sum_{t=1}^T \mathcal{D}(t) (\tilde{u}_{t, \phi(\tilde{I}_t)} - \tilde{u}_{t, \tilde{I}_t})\right] \leq \mathbb{E}[\textrm{SReg}_{\mathcal{D}}(\tilde{\Pi}_T)]
    \end{equation*}
\end{restatable}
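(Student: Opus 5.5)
The plan has two parts: an equality, proved term by term with a tower-property/martingale argument in the style of Lemma \ref{lem:pseudo-from-realized}, and an inequality, which follows from the definition of weighted swap regret.

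\textbf{The equality.} By linearity it suffices to show, for each $t \in [T]$,
\[
\mathbb{E}\bigl[\mathcal{D}(t)\,(\tilde{u}_{t,\phi(\tilde{I}_t)} - \tilde{u}_{t,\tilde{I}_t})\bigr] = \mathbb{E}\bigl[\mathcal{D}(t)\,(\mu_{t,\phi(\tilde{I}_t)} - \mu_{t,\tilde{I}_t})\bigr].
\]
Let $\mathcal{F}_{t-1}$ be the $\sigma$-field generated by the history up to round $t-1$, together with the algorithm's internal randomness used to draw $\tilde{I}_t$. Then $\tilde{I}_t$, and hence $\phi(\tilde{I}_t)$ since $\phi$ is fixed and deterministic, is $\mathcal{F}_{t-1}$-measurable. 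As in the previous proof, with $\mu$ fixed, $\tilde{u}_t$ is drawn independently of $\mathcal{F}_{t-1}$ with mean $\mu_t$. Hence
\[
\mathbb{E}[\tilde{u}_{t,\phi(\tilde{I}_t)} \mid \mathcal{F}_{t-1}] = \mu_{t,\phi(\tilde{I}_t)}, \qquad \mathbb{E}[\tilde{u}_{t,\tilde{I}_t} \mid \mathcal{F}_{t-1}] = \mu_{t,\tilde{I}_t}.
\]
Taking outer expectations (tower property) and summing over $t$ with the deterministic weights $\mathcal{D}(t)$ gives the equality. Equivalently, the sum $\sum_{s \le t} \mathcal{D}(s)\bigl((\tilde{u}_{s,\phi(\tilde{I}_s)}-\mu_{s,\phi(\tilde{I}_s)})-(\tilde{u}_{s,\tilde{I}_s}-\mu_{s,\tilde{I}_s})\bigr)$ is a mean-zero martingale, exactly as $M_t(a)$ was in Lemma \ref{lem:pseudo-from-realized}, with the fixed action $a$ replaced by the predictable action $\phi(\tilde{I}_s)$.

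\textbf{The inequality.} For every realization of the transcript $\tilde{\Pi}_T$, the fixed $\phi$ is one element of $\Phi$. Definition \ref{def:swap-regret} therefore gives, pointwise,
\[
\sum_t \mathcal{D}(t)\,(\tilde{u}_{t,\phi(\tilde{I}_t)} - \tilde{u}_{t,\tilde{I}_t}) \le \mathrm{SReg}_{\mathcal{D}}(\tilde{\Pi}_T).
\]
Taking expectations preserves this inequality.

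\textbf{Main obstacle.} The only delicate point is the measurability and independence claim. The recommendation $\tilde{I}_t$ must be chosen before, and independently of, $\tilde{u}_t$; this holds by the protocol, in which the environment selects $u_t$ independently of $I_t$. The transcript records recommendations $\tilde{I}_t$ rather than played actions. I would note that the hypothetical run is under compliant play, so $\tilde{a}_t = \tilde{I}_t$ and the filtration is well defined. Boundedness of rewards in $[0,1]$ guarantees that all expectations are finite.
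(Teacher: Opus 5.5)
Your proposal is correct and matches the paper's proof: conditioning on $\mathcal{F}_{t-1}$, with $\tilde{I}_t$ measurable, gives the per-round equality via the tower property. The inequality then holds pointwise because the fixed $\phi$ is one candidate in the maximum defining $\mathrm{SReg}_{\mathcal{D}}$. Your explicit inclusion of the algorithm's internal randomness in $\mathcal{F}_{t-1}$ is a small but useful tightening of the paper's measurability claim for randomized algorithms.
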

Another largely straightforward proof, we defer it to Appendix \ref{sec:proofs}. We must also characterize the relationship between pseudo swap-regret and pseudo-regret.

\begin{lemma}
\label{lem:ext-to-swap}
Fix any temporal belief $\mathcal{D}$ and a reward belief $\mathcal{P}$ that satisfies Assumption \ref{ass:slowmoving} with parameter $\rho$. Consider any transcript $\Pi_T = \{(I_t, a_t, u_t)\}_{t=1}^T$ generated with rewards drawn based on $\mu \sim \mathcal{P}$. Then, 
\[
    \textrm{PSReg}_{\mathcal{D}}(\Pi_T, \mu) \leq \textrm{PReg}_{\mathcal{D}}(\Pi_T, \mu) + 2 \rho \Phi(\mathcal{D}) 
\]
and consequently, taking an expectation over $\mathcal{P}$,
\[
        \mathbb{E}[\textrm{PSReg}_{\mathcal{D}}(\Pi_T, \mu)] \leq \mathbb{E}[\textrm{PReg}_{\mathcal{D}}(\Pi_T, \mu)] + 2 \rho \Phi(\mathcal{D})
\]
\end{lemma}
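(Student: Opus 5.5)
We prove the deterministic inequality for a fixed transcript and a fixed $\mu$ in the support of $\mathcal{P}$; the expectation version then follows by taking expectations over $\mu\sim\mathcal{P}$ and the transcript, since $2\rho\Phi(\mathcal{D})$ does not depend on $\mu$. Fix any swap function $\phi$. The idea is to compare the swap deviation on each recommendation class with a \emph{constant-in-time} proxy, namely the $\mathcal{D}$-averaged mean vector $\bar\mu_a := \mathbb{E}_{s\sim\mathcal{D}}[\mu_{s,a}]$. For a fixed action $b$, the slowly moving assumption gives $|\mu_{t,b}-\bar\mu_b| \le \sum_s \mathcal{D}(s)|\mu_{t,b}-\mu_{s,b}| \le \rho\,\Psi_{\mathcal{D}}(t)$. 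This is the same telescoping estimate that underlies Lemma \ref{lem:ubound-general}.

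The key steps are as follows. First, write
\[
\sum_t \mathcal{D}(t)\bigl(\mu_{t,\phi(I_t)}-\mu_{t,I_t}\bigr)
= \sum_t \mathcal{D}(t)\bigl(\bar\mu_{\phi(I_t)}-\bar\mu_{I_t}\bigr) + E_\phi - E_{\mathrm{id}},
\]
where $E_\psi := \sum_t \mathcal{D}(t)\bigl(\mu_{t,\psi(I_t)}-\bar\mu_{\psi(I_t)}\bigr)$.

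Second, with proxy rewards constant in time, the best swap deviation of the first term is bounded by the best fixed-action deviation. Indeed, $\bar\mu_{\phi(a)} \le \max_c \bar\mu_c =: \bar\mu_{c^*}$ for every $a$, so
\[
\sum_t \mathcal{D}(t)\bigl(\bar\mu_{\phi(I_t)}-\bar\mu_{I_t}\bigr) \le \sum_t \mathcal{D}(t)\bigl(\bar\mu_{c^*}-\bar\mu_{I_t}\bigr).
\]

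Third, convert back to true means. The right-hand side equals $\sum_t \mathcal{D}(t)(\mu_{t,c^*}-\mu_{t,I_t}) + E_{\mathrm{id}}$, because the $\mathcal{D}$-weighted average of $\mu_{t,c^*}$ is exactly $\bar\mu_{c^*}$. The first of these terms is at most $\mathrm{PReg}_{\mathcal{D}}(\Pi_T,\mu)$.

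Putting the steps together, the proxy-error terms $E_{\mathrm{id}}$ cancel and only $E_\phi$ remains:
\[
\sum_t \mathcal{D}(t)\bigl(\mu_{t,\phi(I_t)}-\mu_{t,I_t}\bigr) \le \mathrm{PReg}_{\mathcal{D}} + E_\phi .
\]
It remains to bound $E_\phi$ by the pointwise estimate:
\[
|E_\phi| \le \sum_t \mathcal{D}(t)\,\rho\,\Psi_{\mathcal{D}}(t) = \rho\,\Phi(\mathcal{D}).
\]
This is even better than the claimed $2\rho\Phi(\mathcal{D})$. If the cancellation is not used and $E_{\mathrm{id}}$ is bounded separately, one gets exactly $2\rho\Phi(\mathcal{D})$, matching the statement. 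Either way, we finish by taking the max over $\phi$.

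The main obstacle is the second step, namely why swapping can be dominated by a single action. This works only because the proxy $\bar\mu$ is time-invariant, so each class's best response is the same global best $c^*$. The care needed is in checking that replacing $\mu_t$ by $\bar\mu$ introduces errors weighted by $\mathcal{D}(t)\Psi_{\mathcal{D}}(t)$, whose sum is precisely $\Phi(\mathcal{D})$. We also need $\mu_{t,c^*}$ to average correctly to $\bar\mu_{c^*}$, which holds by definition.
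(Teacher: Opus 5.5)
Your proof is correct. It uses the same device as the paper: replace $\mu_{t,c}$ by the $\mathcal{D}$-average $\bar\mu_c=\mathbb{E}_{s\sim\mathcal{D}}[\mu_{s,c}]$ at cost $\rho\Psi_{\mathcal{D}}(t)$, then use $\sum_t\mathcal{D}(t)\Psi_{\mathcal{D}}(t)=\Phi(\mathcal{D})$.

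You differ from the paper in how you compare the swapped term to the best fixed action. After the first replacement, the paper bounds $\mathbb{E}_{s\sim\mathcal{D}}\bigl[\sum_t\mathcal{D}(t)\mu_{s,\phi(I_t)}\bigr]$ by $\mathbb{E}_{s\sim\mathcal{D}}[\max_a\mu_{s,a}]$, which pushes the maximum inside the time average. It then has to apply the slow-moving estimate a second time to return to $\max_a\bar\mu_a$, and that detour is the source of its factor $2$. You observe instead that $\sum_t\mathcal{D}(t)\bar\mu_{\phi(I_t)}$ is a convex combination of the $\bar\mu_c$, hence at most $\max_c\bar\mu_c$. This avoids the detour and gives the sharper bound $\mathrm{PSReg}_{\mathcal{D}}(\Pi_T,\mu)\le\mathrm{PReg}_{\mathcal{D}}(\Pi_T,\mu)+\rho\Phi(\mathcal{D})$.

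This constant is tight. Take $K=2$, $\mathcal{D}$ uniform on $\{1,2\}$, $\mu_1=(\rho,0)$, $\mu_2=(0,\rho)$, $I_1=2$, $I_2=1$. Then the pseudo swap-regret is $\rho$, the pseudo-regret is $\rho/2$, and $\rho\Phi(\mathcal{D})=\rho/2$.

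The $E_{\mathrm{id}}$ bookkeeping is not needed, because the terms $\mu_{t,I_t}$ enter both regrets identically. The whole argument is $\sum_t\mathcal{D}(t)\mu_{t,\phi(I_t)}\le\sum_t\mathcal{D}(t)\bar\mu_{\phi(I_t)}+\rho\Phi(\mathcal{D})\le\max_c\sum_t\mathcal{D}(t)\mu_{t,c}+\rho\Phi(\mathcal{D})$, followed by subtracting $\sum_t\mathcal{D}(t)\mu_{t,I_t}$ and maximizing over $\phi$.

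Your closing remark is slightly off. Without the cancellation, your decomposition would yield $3\rho\Phi(\mathcal{D})$ rather than $2\rho\Phi(\mathcal{D})$, though this does not affect the argument.
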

\begin{proof}
Fix any $\mu \sim \mathcal{P}$. For any pair of time-steps $t, s \in [T]$, we have:
\begin{equation*}
    \mu_{t, a} - \mu_{s,a} \leq \rho|s-t|
\end{equation*}
for any $a \in A$, by Assumption \ref{ass:slowmoving}. If $s \sim \mathcal{D}$, we can take an expectation on both sides to get:
\begin{equation*}
    \mu_{t,a} \leq \mathbb{E}_{s \sim \mathcal{D}}[\mu_{s,a}] + \rho\Psi_\mathcal{D}(t)
\end{equation*}
Fix any swap function $\phi: A \to A$. We can bound the weighted pseudo-swap regret with respect to $\phi$ as:
\begin{align*}
    \sum_{t=1}^T \mathcal{D}(t) \mu_{t, \phi(I_t)} &\leq \sum_{t=1}^T \mathcal{D}(t) \mathbb{E}_{s \sim \mathcal{D}}[\mu_{s, \phi(I_t)}] + \rho\sum_{t=1}^T \mathcal{D}(t) \Psi_{\mathcal{D}}(t) \\
    &= \mathbb{E}_{s \sim \mathcal{D}}\left[\sum_{t=1}^T \mathcal{D}(t) \mu_{s, \phi(I_t)}\right] + \rho \Phi(\mathcal{D}) \\
    &\leq \mathbb{E}_{s \sim \mathcal{D}}\left[\max_{a \in A} \mu_{s, a} \right] + \rho\Phi(\mathcal{D})
\end{align*}
We can bound the final expectation by first noting for any $a \in A$ and an independent $s' \sim \mathcal{D}$ that if we fix $s$,
\begin{equation*}
    \mu_{s,a} \leq \mathbb{E}_{s' \sim \mathcal{D}}[\mu_{s', a}] + \rho\mathbb{E}_{s' \sim \mathcal{D}}|s' - s| \implies \max_{a \in A} \mu_{s,a} \leq \max_{a \in A} \mathbb{E}_{s' \sim \mathcal{D}}[\mu_{s', a}] + \rho\mathbb{E}_{s' \sim \mathcal{D}}|s' - s| 
\end{equation*}
Now, taking expectation also over $s$,
\begin{equation*}
    \mathbb{E}_{s \sim \mathcal{D}}\left[ \max_{a \in A} \mu_{s,a} \right] \leq \max_{a \in A} \mathbb{E}_{s' \sim \mathcal{D}}[\mu_{s', a}] + \rho\mathbb{E}_{s,s' \sim \mathcal{D}}|s' - s| = \max_{a \in A} \sum_{t=1}^T \mathcal{D}(t) \mu_{t, a} + \rho\Phi(\mathcal{D})
\end{equation*}
and so
\begin{equation*}
    \sum_{t=1}^T \mathcal{D}(t) \mu_{t, \phi(I_t)} \leq \max_{a \in A} \sum_{t=1}^T\mathcal{D}(t) \mu_{t, a} + 2\rho\Phi(\mathcal{D})
\end{equation*}
This is true for all $\phi \in \Phi$, so subtracting $\sum_{t=1}^T \mathcal{D}(t) \mu_{t,I_t}$ from both sides gives:
\begin{equation*}
    \textrm{PSReg}_{\mathcal{D}}(\Pi_T, \mu) \leq \textrm{PReg}_{\mathcal{D}}(\Pi_T, \mu) + 2\rho\Phi(\mathcal{D})
\end{equation*}
This holds for any realization of transcript; taking expectation over algorithm randomness and $\mathcal{P}$ gives the same bound in expectation.
\end{proof}

Finally, we use these lemmas to get bounds on the term $\Pr(I_{\tau_i} = a | \sigma_{-i})$ that we use in Theorems \ref{thm:BIC-swap} and \ref{thm:BIC-ext}. These bounds are the main ones needed to convert regret guarantees to incentive-compatibility ones. Given an algorithm achieves no external-regret, an agent with the appropriate beliefs will believe that there is a non-negligible probability of any action being recommended. We obtain two separate but similar bounds, in terms of our two dispersion parameters $\Psi_{\max}$ and $\Phi$.
\begin{lemma}
\label{lem:prob-bound-max1}
Fix an agent $i$ with prior distributions $\mathcal{P}^i, \mathcal{D}^i$ that satisfy Assumptions \ref{ass:adversarial1} and \ref{ass:slowmoving} with parameters $(\alpha, \Delta, \rho)$. Consider play of the recommendation game with some algorithm $\mathcal{A}$, where all agents follow recommendations (the compliant strategy profile $\sigma^*$), and $i$ has observed their recommendation $I_{\tau_i}$. Set $\tilde\Delta:=\Delta-2\rho\,\Psi_{\max}(\mathcal D^i)$.  If $\tilde\Delta>0$, then any algorithm $\mathcal A$ with expected
$\mathcal D^i$-weighted external regret $R^{(\mathcal D^i)}(T)$ satisfies
\begin{equation*}
\Pr\left(I_{\tau_i} = a \right|\sigma^*_{-i}\big)
\;\ge\;
\alpha\left(1-\frac{R^{(\mathcal D^i)}(T) + 2\sqrt{2\,W_2(\mathcal D^i)\,\ln(2K)}}{\tilde\Delta}\right)
\end{equation*}
for each action $a \in A$. 
\end{lemma}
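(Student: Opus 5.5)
Under the compliant profile $\sigma^*$ the algorithm's recommendation at round $t$ depends only on the reward sequence and its own randomness. The agent's belief about her recommendation is therefore
\[
\Pr(I_{\tau_i}=a\mid\sigma^*_{-i})=\mathbb{E}_{\mu\sim\mathcal P^i}\,\mathbb{E}\Bigl[\sum_{t=1}^T\mathcal D^i(t)\,\mathbb{1}[\tilde I_t=a]\Bigr],
\]
where $\tilde\Pi_T$ is the hypothetical transcript generated from rewards drawn according to $\mu$, as in Lemma \ref{lem:pseudo-from-realized}. Write $q_a(\mu):=\mathbb{E}[\sum_t \mathcal D^i(t)\mathbb{1}[\tilde I_t=a]]$ for the $\mathcal D^i$-weighted expected frequency of recommending $a$ on instance $\mu$. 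The plan is to lower bound $q_a(\mu)$ on the ``good'' event $G_a:=\{\mathrm{gap}_{\mu,\mathcal D^i}(a)\ge\Delta\}$, which has probability at least $\alpha$ by Assumption \ref{ass:adversarial1}. On the complement we use $q_a\ge 0$.

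Fix $\mu\in G_a$. By Assumption \ref{ass:slowmoving} and Lemma \ref{lem:ubound-general}, for every $t\in I$ (the enclosing interval of $\supp(\mathcal D^i)$) and every $b\neq a$,
\[
\mu_{t,a}-\mu_{t,b}\ \ge\ \Delta-2\rho\Psi_{\mathcal D^i}(t)\ \ge\ \Delta-2\rho\Psi_{\max}(\mathcal D^i)=\tilde\Delta .
\]
Only rounds with $\mathcal D^i(t)>0$ carry weight, and these lie in $I$, so the bound holds everywhere it matters. Comparing against the fixed action $a$ in the pseudo-regret then gives
\[
\mathrm{PReg}_{\mathcal D^i}(\tilde\Pi_T,\mu)\ \ge\ \sum_t\mathcal D^i(t)\bigl(\mu_{t,a}-\mu_{t,\tilde I_t}\bigr)\ \ge\ \tilde\Delta\sum_t \mathcal D^i(t)\,\mathbb 1[\tilde I_t\ne a]
\]
pathwise, since rounds with $\tilde I_t=a$ contribute $0$ and the others contribute at least $\tilde\Delta$. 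Taking expectations over the algorithm's randomness and the reward draws, and applying Lemma \ref{lem:pseudo-from-realized}, we get
\[
\tilde\Delta\,\bigl(1-q_a(\mu)\bigr)\ \le\ \mathbb{E}[\mathrm{Reg}_{\mathcal D^i}(\tilde\Pi_T)]+2\sqrt{2W_2(\mathcal D^i)\ln(2K)}\ \le\ R^{(\mathcal D^i)}(T)+2\sqrt{2W_2(\mathcal D^i)\ln(2K)} .
\]
The last step uses that the algorithm's expected regret guarantee holds against any reward sequence, in particular against the simulated sequence. Dividing by $\tilde\Delta>0$ yields $q_a(\mu)\ge 1-(R^{(\mathcal D^i)}(T)+2\sqrt{2W_2(\mathcal D^i)\ln(2K)})/\tilde\Delta$.

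Finally,
\[
\Pr(I_{\tau_i}=a\mid\sigma^*_{-i})\ \ge\ \mathbb{E}_\mu\bigl[q_a(\mu)\,\mathbb 1_{G_a}\bigr]\ \ge\ \pi_a\Bigl(1-\tfrac{R^{(\mathcal D^i)}(T)+2\sqrt{2W_2(\mathcal D^i)\ln(2K)}}{\tilde\Delta}\Bigr).
\]
If the bracket is nonnegative, $\pi_a\ge\alpha$ gives the claim. If it is negative, the claim is trivial because the left side is at least $0$.

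The main obstacle is justifying the first step: identifying the agent's belief about $I_{\tau_i}$ with the $\mathcal D^i$-weighted frequency over the hypothetical transcript. This requires the independence of $\tau_i$ from the rewards, and the fact that under $\sigma^*$ the law of $\tilde I_t$ does not depend on who arrives when, as discussed in the remark after the recommendation game. The rest is a direct chain of Lemma \ref{lem:ubound-general} and Lemma \ref{lem:pseudo-from-realized}.
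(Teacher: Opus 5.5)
Your proposal is correct and follows the paper's proof essentially step for step. Like the paper, you restrict to the event $\{\mathrm{gap}_{\mu,\mathcal D^i}(a)\ge\Delta\}$ and use Lemma \ref{lem:ubound-general} to get a per-round gap of at least $\tilde\Delta$ on the enclosing interval of the support. You then lower-bound the pseudo-regret against comparator $a$ by $\tilde\Delta$ times the weighted frequency of non-$a$ recommendations, convert to realized regret via Lemma \ref{lem:pseudo-from-realized}, and average over the prior. Your explicit handling of the case where the bracket is negative, which is what makes replacing $\pi_a$ by $\alpha$ legitimate, is a small detail the paper leaves implicit.
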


\begin{proof}
Fix an action $a \in A$, and define the corresponding event $E_a := \{\mu: \text{gap}_{\mu, \mathcal{D}^i}(a) \geq \Delta \}$. Fix a reward instance $\mu \in E_a$, and let $\mathcal{A}$ generate the hypothetical transcript $\tilde{\Pi}_T = \{(\tilde{I}_t, \tilde{a}_t, \tilde{u}_t)\}_{t=1}^T$ under rewards generated by $\mu$. Define $I = \{t \in [T]: \min(\supp(\mathcal{D}^i)) \leq t \leq \max(\supp(\mathcal{D}^i))\}$. By Lemma \ref{lem:ubound-general}, for any time-step $t \in I$ and action $b \neq a$,
\begin{equation*}
    \mu_{t,a} - \mu_{t,b} \geq \text{gap}_{\mu, \mathcal{D}^i}(a) - 2\rho\Psi_{\mathcal{D}^i}(t) \geq \Delta - 2\rho\Psi_{\max}(\mathcal{D}^i) = \tilde{\Delta}
\end{equation*}
Hence, for all $t \in I$,
\[
\mu_{t, a} - \mu_{t, \tilde{I}_t}\ \ge\ \tilde\Delta\,\mathbf 1[\tilde{I}_t\ne a]
\]
Note here that $\tilde{I}_t$ refers to the recommendation issued by the transcript, not the private observation $I_{\tau_i}$ of the agent. Taking sums on both sides weighted by $\mathcal{D}^i$ we get:
\begin{equation*}
    \sum_{t \in I} \mathcal{D}^i(t) \left(\mu_{t, a} - \mu_{t, \tilde{I}_t}\right) \geq \tilde{\Delta}\sum_{t \in I} \mathcal{D}^i(t) \mathbf 1[\tilde{I}_t\ne a] = \tilde{\Delta} (1 - N_a)
\end{equation*}
where $N_a := \sum_{t \in I} \mathcal{D}^i(t) \mathbf 1[\tilde{I}_t = a]$ is by definition the probability under $\mathcal{D}^i$ of a recommended action being $a$, for a fixed $\mu$. Therefore, $\mathbb{E}[N_a] = \Pr(I_{\tau_i} = a | \sigma_{-i}^*)$, where the expectation is being taken over $\mathcal{P}^i$ and the algorithm's randomness. Note that the left-most term above is pseudo-regret against the fixed comparator action $a$, so
\begin{equation*}
    \mathrm{PReg}_{\mathcal D^i}(\tilde{\Pi}_T, \mu) \geq \tilde{\Delta}(1 - N_a) \implies N_a \geq 1 - \frac{\mathrm{PReg}_{\mathcal D^i}(\tilde{\Pi}_T, \mu)}{\tilde{\Delta}}
\end{equation*}
Taking an expectation over the algorithm's randomness and using Lemma \ref{lem:pseudo-from-realized}, 
\begin{equation*}
    \mathbb{E}_{\mathcal{A}}[N_a] \geq 1 - \frac{\mathbb{E}_{\mathcal{A}}\left[\mathrm{Reg}_{\mathcal{D}^i}(\tilde{\Pi}_T)\right] + 2\sqrt{2\,W_2(\mathcal D^i)\,\ln(2K)}}{\tilde{\Delta}} \geq  1 - \frac{\R(T) + 2\sqrt{2\,W_2(\mathcal D^i)\,\ln(2K)}}{\tilde{\Delta}}
\end{equation*}
with the final step coming from the regret guarantee of $\mathcal{A}$. The above inequality is true for any $\mu \in E_a$. In general, $N_a \geq 0$ by definition, so $\mathbb{E}[N_a |\; E_a^C] \geq 0$. Further, recall that by Assumption \ref{ass:adversarial1}, $\Pr(E_a) \geq \alpha$. So, taking an expectation over the full prior $\mathcal{P}^i$,
\begin{align*}
    \mathbb{E}[N_a] \geq \Pr(E_a)\;\mathbb{E}[N_a | E_a] &\geq \alpha\left(1 - \frac{R^{(\mathcal{D}^i)}(T) + 2\sqrt{2\,W_2(\mathcal D^i)\,\ln(2K)}}{\tilde{\Delta}}\right)
\end{align*}
\end{proof}

The following lemma's proof proceeds nearly identically to that of Lemma \ref{lem:prob-bound-max1}, so we leave the details in Appendix \ref{sec:proofs}. 

\begin{restatable}{lemma}{lemmathree}
\label{lem:prob-bound-max2}
Fix an agent $i$ with prior distributions $\mathcal{P}^i, \mathcal{D}^i$ that satisfy Assumptions \ref{ass:adversarial1} and \ref{ass:slowmoving} with parameters $(\alpha, \Delta, \rho)$. Consider play of the recommendation game with some algorithm $\mathcal{A}$, where all agents follow recommendations (the compliant strategy profile $\sigma^*$), and $i$ has observed their recommendation $I_{\tau_i}$. Then any algorithm $\mathcal A$ with expected $\mathcal D^i$-weighted external regret $R^{(\mathcal D^i)}(T)$ satisfies
\begin{equation*}
\Pr\left(I_{\tau_i} = a \right|\sigma^*_{-i}\big)
\;\ge\;
\alpha\left(1-\frac{\R(T)+ 2\sqrt{2\,W_2(\mathcal D^i)\,\ln(2K)} + 2\rho\Phi(\mathcal{D}^i)}{\Delta}\right)
\end{equation*}
for each action $a \in A$. 
\end{restatable}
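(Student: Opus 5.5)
The plan follows the proof of Lemma \ref{lem:prob-bound-max1}. The only change is that the slow-movement error is no longer paid pointwise through $\Psi_{\max}$. Instead it is paid in aggregate through $\Phi$, by comparing against the $\mathcal{D}^i$-averaged gap rather than a per-round gap.

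Fix $a \in A$ and let $E_a := \{\mu : \mathrm{gap}_{\mu,\mathcal{D}^i}(a) \ge \Delta\}$. Fix $\mu \in E_a$ and let $\tilde\Pi_T$ be the hypothetical transcript, with $N_a := \sum_t \mathcal{D}^i(t)\mathbf{1}[\tilde I_t = a]$, so that $\mathbb{E}[N_a] = \Pr(I_{\tau_i}=a \mid \sigma^*_{-i})$.

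The key step is to lower bound the pseudo-regret against the comparator $a$ as follows:
\begin{align*}
\sum_t \mathcal{D}^i(t)\bigl(\mu_{t,a}-\mu_{t,\tilde I_t}\bigr)
&= \sum_{b\neq a}\sum_t \mathcal{D}^i(t)\mathbf{1}[\tilde I_t=b]\bigl(\mu_{t,a}-\mu_{t,b}\bigr).
\end{align*}
For each $t$ and $b$, I would replace $\mu_{t,a}-\mu_{t,b}$ by its $\mathcal{D}^i$-average $\mathbb{E}_{s\sim\mathcal{D}^i}[\mu_{s,a}-\mu_{s,b}]$. Assumption \ref{ass:slowmoving} gives $|(\mu_{t,a}-\mu_{t,b}) - (\mu_{s,a}-\mu_{s,b})| \le 2\rho|t-s|$, so the replacement costs at most $2\rho\Psi_{\mathcal{D}^i}(t)$; this is the same computation as in Lemma \ref{lem:ubound-general}. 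The averaged difference is at least $\mathrm{gap}_{\mu,\mathcal{D}^i}(a) \ge \Delta$. Summing, and bounding the indicator weights by $\mathcal{D}^i(t)$ in the error term, yields
\[
\mathrm{PReg}_{\mathcal{D}^i}(\tilde\Pi_T,\mu) \;\ge\; \Delta(1-N_a) - 2\rho\sum_t \mathcal{D}^i(t)\Psi_{\mathcal{D}^i}(t) \;=\; \Delta(1-N_a) - 2\rho\,\Phi(\mathcal{D}^i).
\]
This holds for all $t \in [T]$, so no restriction to the interval $I$ is needed. Unlike the previous lemma, it also needs no positivity condition on a reduced gap $\tilde\Delta$.

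Rearranging gives $N_a \ge 1 - \bigl(\mathrm{PReg}_{\mathcal{D}^i}(\tilde\Pi_T,\mu) + 2\rho\Phi(\mathcal{D}^i)\bigr)/\Delta$. Next I take expectation over the algorithm's randomness and apply Lemma \ref{lem:pseudo-from-realized} to replace expected pseudo-regret by $R^{(\mathcal{D}^i)}(T) + 2\sqrt{2W_2(\mathcal{D}^i)\ln(2K)}$. Finally, I average over $\mathcal{P}^i$ using $N_a \ge 0$ on $E_a^C$ and $\Pr(E_a)\ge\alpha$ from Assumption \ref{ass:adversarial1}, which gives the stated bound.

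The only delicate point is the aggregation step. One must check that the error really collapses to $\Phi$ and not to something larger. It does, because the indicator weights $\mathcal{D}^i(t)\mathbf{1}[\tilde I_t=b]$ sum over $b$ to at most $\mathcal{D}^i(t)$, so the total error is at most $2\rho\,\mathbb{E}_{t\sim\mathcal{D}^i}[\Psi_{\mathcal{D}^i}(t)] = 2\rho\Phi(\mathcal{D}^i)$. Everything else repeats the previous proof.
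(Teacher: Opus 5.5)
Your proposal is correct and matches the paper's proof step for step. The paper likewise applies Lemma \ref{lem:ubound-general} at every $t\in[T]$ to get $\mu_{t,a}-\mu_{t,\tilde I_t}\ge(\Delta-2\rho\Psi_{\mathcal D^i}(t))\mathbf{1}[\tilde I_t\ne a]$, bounds the weighted dispersion sum by $\Phi(\mathcal D^i)$, and finishes with Lemma \ref{lem:pseudo-from-realized}, the regret guarantee, $N_a\ge 0$, and $\Pr(E_a)\ge\alpha$. One small detail, which the paper also glosses over: replacing $\Pr(E_a)$ by $\alpha$ in the last step needs the bracketed factor to be nonnegative; when it is negative, the claim holds trivially because the left-hand side is a probability.
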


\subsection{Proofs of Main Theorems}
Having  built up our machinery, we formally restate our theorems linking forms of regret to approximate incentive-compatibility and provide their proofs. 

\begin{theorem}[Swap-Regret$\implies$IC-BNE]
\label{thm:BIC-swap}
Fix a collection of agents with reward and temporal beliefs $\{(\mathcal{P}^i, \mathcal{D}^i)\}_{i=1}^T$ that all satisfy Assumptions \ref{ass:adversarial1} and \ref{ass:slowmoving} with parameters $(\alpha, \Delta, \rho)$. If an algorithm $\mathcal{A}$ played over $T$ rounds guarantees expected $\mathcal{D}^i$-weighted swap regret $R_{\emph{Swap}}^{(\mathcal{D}^i)}(T)$ for all $i \in [T]$, then $\mathcal{A}$ implements an incentive-compatible Bayes $\epsilon(T)$-Nash equilibrium with error term for agent $i$:
\[
\epsilon_i(T) \;=\; \min\{\eta_{\Psi}(T),\, \eta_{\Phi}(T), 1\}.
\]
where we define
    \begin{equation*}
    \eta_{\Psi}(T)
    =\frac{\RSwap(T)\,
    (\Delta-2\rho\,\Psi_{\max}(\mathcal D^i))}
          {\alpha\left((\Delta-2\rho\,\Psi_{\max}(\mathcal D^i))
          -\RSwap(T) - 2\sqrt{2\,W_2(\mathcal D^i)\,\ln(2K)}\right)}
    \end{equation*}
    \begin{equation*}
    \eta_{\Phi}(T)
    =\frac{\RSwap(T)\Delta}
          {\alpha(\Delta - (\RSwap(T) + 2\sqrt{2\,W_2(\mathcal D^i)\,\ln(2K)} + 2\rho\Phi(\mathcal{D}^i)))}
    \end{equation*}
    where both $\eta_{\Psi}(T)$ and $\eta_{\Phi}(T)$ are well-defined only if they are positive. Otherwise they are interpreted as $+\infty$.
\end{theorem}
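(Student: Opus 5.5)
We reduce the conditional incentive constraint to an unconditional swap-regret statement, then divide by a lower bound on the probability of the conditioning event. Fix agent $i$, a recommendation $a$, and a deviation; since only the action taken upon seeing $a$ matters, it suffices to consider deviating from $a$ to some fixed $b\neq a$. Define the swap function $\phi_{a\to b}$ that maps $a$ to $b$ and fixes every other action.

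\textbf{Step 1: unconditional gain as a weighted swap.} Under $\sigma^*_{-i}$, the algorithm's transcript does not depend on who arrives when. So, from agent $i$'s perspective, $I_{\tau_i}$ is distributed as $\tilde I_t$ with $t\sim\mathcal D^i$ drawn independently of the hypothetical transcript generated under $\mu\sim\mathcal P^i$. Using the independence of $\tau_i$ from the reward belief, we get
\[
\Pr(I_{\tau_i}=a\mid\sigma^*_{-i})\,\bigl(U_i(\sigma'\mid a)-U_i(\sigma^*\mid a)\bigr)=\mathbb E\Bigl[\sum_{t=1}^T\mathcal D^i(t)\bigl(\mu_{t,\phi_{a\to b}(\tilde I_t)}-\mu_{t,\tilde I_t}\bigr)\Bigr],
\]
where the expectation is over $\mu\sim\mathcal P^i$ and the algorithm's randomness. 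The left side equals $\mathbb E[(\mu_{\tau_i,b}-\mu_{\tau_i,a})\mathbf 1[I_{\tau_i}=a]]$, and expanding over $\tau_i$ gives the right side. For each fixed $\mu$, Lemma~\ref{lem:pseudo-from-swap} bounds the inner expectation by $\mathbb E[\mathrm{SReg}_{\mathcal D^i}(\tilde\Pi_T)]\le R^{(\mathcal D^i)}_{\mathrm{Swap}}(T)$. Averaging over $\mathcal P^i$ keeps the same bound.

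\textbf{Step 2: lower bound on the recommendation probability.} Swap regret dominates external regret, since constant swap functions are allowed. Hence $R^{(\mathcal D^i)}(T)\le R^{(\mathcal D^i)}_{\mathrm{Swap}}(T)$, and both Lemma~\ref{lem:prob-bound-max1} and Lemma~\ref{lem:prob-bound-max2} apply with $R_{\mathrm{Swap}}$ substituted. Writing $\tilde\Delta=\Delta-2\rho\Psi_{\max}(\mathcal D^i)$, we obtain
\[
\Pr(I_{\tau_i}=a\mid\sigma^*_{-i})\ge \frac{\alpha\bigl(\tilde\Delta-R_{\mathrm{Swap}}-2\sqrt{2W_2\ln(2K)}\bigr)}{\tilde\Delta}
\]
and the analogous bound with $\Delta$ in place of $\tilde\Delta$ and the additional $2\rho\Phi(\mathcal D^i)$ subtracted in the numerator.

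\textbf{Step 3: combine.} Whenever a lower bound from Step 2 is positive, divide the Step 1 bound by it. This gives deviation gain at most $R_{\mathrm{Swap}}\tilde\Delta/(\alpha(\cdots))=\eta_\Psi(T)$, and analogously at most $\eta_\Phi(T)$. Both bounds hold simultaneously, so we may take their minimum. The cap at $1$ is trivial because rewards lie in $[0,1]$, so any deviation gain is at most $1$. If a denominator is nonpositive, that bound is vacuous, which matches the $+\infty$ convention. If $\Pr(I_{\tau_i}=a)=0$, the constraint is not required.

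\textbf{Main obstacle.} The delicate step is the identity in Step 1. It must be justified that, under $\sigma^*_{-i}$, agent $i$'s belief about $(I_{\tau_i},\mu_{\tau_i})$ factors as $\tau_i\sim\mathcal D^i$ independent of the transcript generated under $\mathcal P^i$. This relies on the compliant profile making the transcript law independent of the arrival order, together with the assumed independence between temporal and reward beliefs. The remaining steps are direct applications of the earlier lemmas.
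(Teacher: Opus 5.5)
Your proposal is correct and follows the paper's proof essentially step for step. It uses the same ratio decomposition of the conditional deviation gain, bounds the numerator via Lemma~\ref{lem:pseudo-from-swap} with the single-action swap $\phi_{a\to b}$, and bounds the denominator via Lemmas~\ref{lem:prob-bound-max1} and~\ref{lem:prob-bound-max2} using that swap regret dominates external regret. One small caveat is inherited from the paper's own statement: the $\eta_\Psi$ branch needs $\tilde\Delta>0$, a hypothesis of Lemma~\ref{lem:prob-bound-max1}, because when $\tilde\Delta<0$ your Step~2 expression $\alpha(\tilde\Delta-R_{\mathrm{Swap}}-2\sqrt{2W_2\ln(2K)})/\tilde\Delta$ is positive but is not an established lower bound, so ``divide whenever it is positive'' should read ``divide whenever $\tilde\Delta>0$ and it is positive.''
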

\begin{proof}
Say $\mathcal{A}$ has generated the (random) transcript $\Pi_T = \{(I_t, a_t, u_t)\}_{t=1}^T$. Fix the compliant strategy profile $\sigma^* = (\sigma_1^*, \sigma_2^*, \cdots, \sigma_T^*)$ in which each agent follows their recommended action. Suppose agent $i$ has been given recommendation $I_{\tau_i} = a$. A deviation from strategy corresponds to choosing a different recommendation-to-action function, and since all other agent strategies are fixed, it suffices to consider deviations from the given recommendation $a$ to any other action $b \neq a$. Conditioning on the given recommendation,
\begin{equation}
\label{eq:imp-eq}
\mathbb{E}[\mu_{\tau_i,b}-\mu_{\tau_i,a}\mid I_{\tau_i}=a, \sigma_{-i}^*]
=\frac{\mathbb{E}\left[(\mu_{\tau_i,b}-\mu_{\tau_i,a})\mathbf{1}[I_{\tau_i}=a] | \sigma_{-i}^*\right]}
       {\Pr(I_{\tau_i}=a | \sigma_{-i})}.
\end{equation}
where all expectations and probabilities are taken over $\mu \sim \mathcal{P}^i, \tau_i \sim \mathcal{D}^i$ and the agent's knowledge of $\mathcal{A}$. Taking the expectation with respect to $\mathcal{D}^i$ into the expression, we get:
\begin{align*}
\mathbb{E}\!\left[(\mu_{\tau_{i},b}-\mu_{\tau_i,a})\mathbf{1}[I_{\tau_i}=a] | \sigma_{-i}^* \right]
&= \mathbb{E}\!\left[\sum_{t=1}^T
\mathcal D^i(t)\,(\mu_{t,b} -\mu_{t,a})\,\mathbf{1}[I_{t}=a] \Big| \sigma_{-i}^*\right] \\&= \mathbb{E}\!\left[\sum_{t=1}^T
\mathcal D^i(t)\,(\mu_{t,\phi(I_t)} -\mu_{t,I_t})\,\Big| \sigma_{-i}^*\right] \leq \RSwap(T)
\end{align*}
using Lemma \ref{lem:pseudo-from-swap} with the swap function $\phi(a) = b,$ and $\phi(k) = k \; \forall k \neq a$. Note that the regret guarantee holds since we are conditioning on all agents following their given recommendations. We can lower bound $1 / \Pr(I_{\tau_i}=a | \sigma_{-i})$ two ways, directly using Lemmas \ref{lem:prob-bound-max1} and \ref{lem:prob-bound-max2}.
\begin{equation*}
    \frac{1}{\Pr(I_{\tau_i}=a | \sigma_{-i})} \leq \frac{\Delta - 2\rho\Psi_{\max}(\mathcal{D}^i)}{\alpha(\Delta - 2\rho\Psi_{\max}(\mathcal{D}^i) - R_{\text{Swap}}^{(\mathcal{D}^i)}(T)- 2\sqrt{2\,W_2(\mathcal D^i)\,\ln(2K)})} 
\end{equation*}
and
\begin{equation*}
    \frac{1}{\Pr(I_{\tau_i}=a|\sigma_{-i})} \leq \frac{\Delta}{\alpha(\Delta - R_{\text{Swap}}^{(\mathcal{D}^i)}(T) - 2\sqrt{2\,W_2(\mathcal D^i)\,\ln(2K)} - 2\rho\Phi(\mathcal{D}^i))}
\end{equation*}
where each inequality holds iff the right-hand-side is positive. Note that these lemmas are stated in terms of external regret, and we can use swap-regret because it upper-bounds external regret. Substituting these inequalities back into equation (\ref{eq:imp-eq}), we get:
\begin{equation*}
\mathbb{E}[\mu_{\tau_i,b}-\mu_{\tau_i,a}\mid I_{\tau_i}=a, \sigma_{-i}^*] \leq  \min\{\eta_{\Psi}(T),\, \eta_{\Phi}(T)\}
\end{equation*}
as desired.
\end{proof}

\begin{theorem}
\label{thm:BIC-ext}
Fix a collection of agents with reward and temporal beliefs $\{(\mathcal{P}^i, \mathcal{D}^i)\}_{i=1}^T$ that all satisfy Assumptions \ref{ass:adversarial1} and \ref{ass:slowmoving} with parameters $(\alpha, \Delta, \rho)$. If an algorithm $\mathcal{A}$ played over $T$ rounds guarantees expected $\mathcal{D}^i$-weighted external regret $R^{(\mathcal{D}^i)}(T)$ for all $i \in [T]$, then $\mathcal{A}$ implements an incentive-compatible Bayes $\epsilon(T)$-Nash equilibrium with error term for agent $i$:
\[
\epsilon_i(T) \;=\; \min\{\eta_{\Psi}(T),\, \eta_{\Phi}(T), 1\}.
\]
where we define
    \begin{equation*}
    \eta_{\Psi}(T)
    =\frac{(\R(T) + 2\rho\Phi(\mathcal{D}^i) +2\sqrt{2W_2(\mathcal{D}^i)\ln(2K)})\,
    (\Delta-2\rho\,\Psi_{\max}(\mathcal D^i))}
          {\alpha(\Delta - 2\rho\Psi_{\max}(\mathcal{D}^i) - \R(T) - 2\sqrt{2\,W_2(\mathcal D^i)\,\ln(2K)})}
    \end{equation*}
    \begin{equation*}
    \eta_{\Phi}(T)
    =\frac{(\R(T) + 2\rho\Phi(\mathcal{D}^i) + 2\sqrt{2W_2(\mathcal{D}^i)\ln(2K)})\Delta}
          {\alpha(\Delta - 2\rho\Phi(\mathcal{D}^i) - \R(T) - 2\sqrt{2\,W_2(\mathcal D^i)\,\ln(2K)})}
    \end{equation*}
Again, both $\eta_{\Psi}(T)$ and $\eta_{\Phi}(T)$ are well-defined only if positive. Otherwise, they are interpreted to be $+\infty$. 
\end{theorem}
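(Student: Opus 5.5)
The plan is to follow the proof of Theorem \ref{thm:BIC-swap} exactly, changing only the bound on the numerator: in place of a swap-regret guarantee we obtain one from external regret through Lemmas \ref{lem:pseudo-from-realized} and \ref{lem:ext-to-swap}. Fix agent $i$, assume the compliant profile $\sigma^*_{-i}$, and suppose agent $i$ observes $I_{\tau_i}=a$. As in that proof, it is enough to bound, for every $b\neq a$, the conditional deviation gain
\[
\mathbb{E}[\mu_{\tau_i,b}-\mu_{\tau_i,a}\mid I_{\tau_i}=a,\sigma^*_{-i}] .
\]
We write this as the ratio in (\ref{eq:imp-eq}). Because $\tau_i\sim\mathcal{D}^i$ is independent of the reward belief and of the algorithm's transcript under $\sigma^*$, the numerator equals $\mathbb{E}\bigl[\sum_t \mathcal{D}^i(t)(\mu_{t,\phi(\tilde I_t)}-\mu_{t,\tilde I_t})\bigr]$ for the swap $\phi$ that sends $a\mapsto b$ and fixes every other action. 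Here the expectation is over $\mu\sim\mathcal{P}^i$ and over the hypothetical transcript $\tilde\Pi_T$ that $\mathcal{A}$ produces on rewards drawn according to $\mu$.

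To bound the numerator, first fix $\mu$. The $\phi$-deviation sum is at most $\mathrm{PSReg}_{\mathcal{D}^i}(\tilde\Pi_T,\mu)$. By Lemma \ref{lem:ext-to-swap}, which uses Assumption \ref{ass:slowmoving}, this is at most $\mathrm{PReg}_{\mathcal{D}^i}(\tilde\Pi_T,\mu)+2\rho\Phi(\mathcal{D}^i)$. Taking expectation over the algorithm's randomness and applying Lemma \ref{lem:pseudo-from-realized} then bounds the numerator by
\[
\mathbb{E}[\mathrm{Reg}_{\mathcal{D}^i}(\tilde\Pi_T)]+2\sqrt{2W_2(\mathcal{D}^i)\ln(2K)}+2\rho\Phi(\mathcal{D}^i).
\]
The external regret guarantee bounds the first term by $R^{(\mathcal{D}^i)}(T)$, uniformly over the realized reward sequence, so the same bound survives averaging over $\mu\sim\mathcal{P}^i$. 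The result is exactly the common factor that appears in both $\eta_\Psi$ and $\eta_\Phi$.

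For the denominator we use Lemmas \ref{lem:prob-bound-max1} and \ref{lem:prob-bound-max2}, which are already stated in terms of external regret, so no conversion is needed. Lemma \ref{lem:prob-bound-max1} requires $\tilde\Delta>0$ and gives $1/\Pr(I_{\tau_i}=a)\le \tilde\Delta/\bigl(\alpha(\tilde\Delta - R - 2\sqrt{2W_2\ln 2K})\bigr)$ whenever the right-hand side is positive. Lemma \ref{lem:prob-bound-max2} gives the analogous bound with $\Delta$ in the numerator and $\Delta-2\rho\Phi-R-2\sqrt{\cdot}$ in the denominator. Multiplying the numerator bound by each of these yields $\eta_\Psi$ and $\eta_\Phi$ respectively, and the minimum follows.

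The trivial bound of $1$ holds because rewards lie in $[0,1]$, so any conditional gain is at most $1$. When a denominator is nonpositive, the convention that the corresponding $\eta$ equals $+\infty$ makes that bound vacuous. The only step requiring real care is the first reduction: checking that the swap deviation conditioned on the observed recommendation is correctly expressed through the hypothetical transcript, and that Lemma \ref{lem:ext-to-swap} may be applied pointwise in $\mu$ before taking expectations. Both follow the proof of Theorem \ref{thm:BIC-swap} almost verbatim, so no step should be a genuine obstacle.
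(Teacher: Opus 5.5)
Your proposal is correct and follows the paper's proof essentially step for step. You bound the numerator via $\mathrm{PSReg}\le\mathrm{PReg}+2\rho\Phi(\mathcal D^i)$ (Lemma \ref{lem:ext-to-swap}) followed by Lemma \ref{lem:pseudo-from-realized}, and you bound the denominator via Lemmas \ref{lem:prob-bound-max1} and \ref{lem:prob-bound-max2}. Your reading of the $+\infty$ convention as ``nonpositive denominator'' is the right one: it forces $\Delta-2\rho\Psi_{\max}(\mathcal D^i)>0$, which Lemma \ref{lem:prob-bound-max1} requires. Under the statement's literal ``$\eta$ nonpositive'' convention, $\eta_\Psi$ would come out positive, without justification, whenever $\Delta-2\rho\Psi_{\max}(\mathcal D^i)<0$.
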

\begin{proof}
The proof follows closely that of Theorem \ref{thm:BIC-swap}. Let $\mathcal{A}$ generate a random transcript $\Pi_T = \{(I_t, a_t, u_t)\}_{t=1}^T$ and consider the compliant strategy profile $\sigma^*$ where all agents follow their recommendations. Consider any agent $i$ who receives recommendation $I_{\tau_i} = a$, and any strategy deviation which causes them to take action $b \neq a$.
\begin{equation}
\label{eq:imp-eq-repeated}
\mathbb{E}[\mu_{\tau_i,b}-\mu_{\tau_i,a}\mid I_{\tau_i}=a, \sigma_{-i}^*]
=\frac{\mathbb{E}\left[(\mu_{\tau_i,b}-\mu_{\tau_i,a})\mathbf{1}[I_{\tau_i}=a] | \sigma_{-i}^*\right]}
       {\Pr(I_{\tau_i}=a | \sigma_{-i})}.
\end{equation}
Note that the numerator is upper-bounded by pseudo-swap regret, so:
\begin{align*}
    \mathbb{E}\left[(\mu_{\tau_i,b}-\mu_{\tau_i,a})\mathbf{1}[I_{\tau_i}=a] | \sigma_{-i}^*\right] &\leq \mathbb{E}[\textrm{PSReg}_{\mathcal{D}^i}(\Pi_T, \mu)] \\
    &\leq \mathbb{E}[\textrm{PReg}_{\mathcal{D}^i}(\Pi_T, \mu)] + 2\rho\Phi(\mathcal{D}^i) \\
    &\leq \R(T) + 2\rho\Phi(\mathcal{D}^i) + 2\sqrt{2W_2(\mathcal{D}^i)\ln(2K)}
\end{align*}
using Lemmas \ref{lem:pseudo-from-realized} and \ref{lem:ext-to-swap}.
We lower-bound $1 / \Pr(I_{\tau_i}=a)$ two ways exactly as in Theorem \ref{thm:BIC-swap},
\begin{equation*}
    \frac{1}{\Pr(I_{\tau_i}=a | \sigma_{-i})} \leq \frac{\Delta - 2\rho\Psi_{\max}(\mathcal{D}^i)}{\alpha(\Delta - 2\rho\Psi_{\max}(\mathcal{D}^i) - \R(T) - 2\sqrt{2\,W_2(\mathcal D^i)\,\ln(2K)})} 
\end{equation*}
\begin{equation*}
    \frac{1}{\Pr(I_{\tau_i}=a)} \leq \frac{\Delta}{\alpha(\Delta - (\R(T) + 2\sqrt{2\,W_2(\mathcal D^i)\,\ln(2K)} + 2\rho\Phi(\mathcal{D}^i))}
\end{equation*}
noting that Lemmas \ref{lem:prob-bound-max1} and \ref{lem:prob-bound-max2} relied only on an external regret guarantee. Substituting these inequalities back into equation (\ref{eq:imp-eq-repeated}) gives the stated result.
\end{proof}

\section{Concrete Bounds for Structured Temporal Beliefs}
The general statement of Theorem \ref{thm:BIC-ext} is intentionally broad, but this generality makes the resulting bound somewhat abstract. In particular, both the dispersion and regret terms have a dependence on the agent's temporal belief distribution $\mathcal{D}$, and the regret bound may vary significantly based on the specific algorithm $\mathcal{A}$ used to achieve it. Thus it is not immediately clear when the bound is strong or how it behaves in concrete settings. To obtain better insight, we focus on structured distributions $\mathcal{D}$ for which terms like $\Psi_{\max}(\mathcal{D})$ and $\Phi(\mathcal{D})$ reduce to explicit, easily computable formulas and for which strong regret bounds are attainable. 

One key observation we make first is that incentive-compatibility is a property defined through conditional expectations, and behaves linearly with respect to an agent's beliefs. If an agent's temporal beliefs can be expressed as a mixture of simpler distributions, the agent's expected gain from deviation is the convex combination of the corresponding gains under the constituent beliefs. That is, achieving incentive-compatibility explicitly for a finite collection of temporal belief distributions immediately extends to all distributions in their convex hull. As we will see, this result does more than extend incentive-compatibility to a larger class of beliefs; by establishing incentive-compatibility for more localized temporal beliefs, it allows us to also handle highly diffuse beliefs without imposing strong drift constraints, thus accommodating truly dynamic environments.

\begin{lemma}
\label{lem:mixtures-bic}
Fix an algorithm $\mathcal{A}$, a collection of $T$ agents, and some agent $i \in [T]$. Consider any set $\{\mathcal{D}^i_k\}_{k=1}^m$ of possible temporal beliefs for agent $i$. If $\mathcal{A}$ implements an incentive-compatible $\epsilon$-BNE for the instance in which agent $i$ has temporal belief $\mathcal{D}^i_k$ (while all other agents have unchanged beliefs), for all $k \in [m]$, then it also implements an incentive-compatible $\epsilon$-BNE for the instance in which agent $i$ has belief $\bar{\mathcal D}^i$, any distribution in the convex hull of $\{\mathcal{D}^i_k\}_{k=1}^m$, i.e.
\begin{equation*}
    \bar{\mathcal D}^i \;:=\; \sum_{k=1}^m \lambda_k \mathcal D^i_k .
\end{equation*}
for some set of non-negative mixture weights $\lambda = \{\lambda_k\}_{k=1}^m$ summing to 1.
\end{lemma}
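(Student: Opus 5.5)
The plan is to reduce the $\epsilon$-BNE condition to a statement about unconditional (joint) expectations, which are linear in the temporal belief, and then average. Fix agent $i$, a recommendation $a$, and a deviation $b\neq a$. Under $\sigma^*_{-i}$, the distribution of the transcript does not depend on agent $i$'s temporal belief. This is the observation from the Remark: compliant play makes the recommendation distribution at each round independent of who arrives when. The transcript depends only on $\mathcal{P}^i$ and the algorithm. Hence, writing $\tau_i\sim\mathcal{D}$, both
\[
G_{\mathcal{D}} := \mathbb{E}\bigl[(\mu_{\tau_i,b}-\mu_{\tau_i,a})\mathbf{1}[I_{\tau_i}=a]\mid\sigma^*_{-i}\bigr]=\mathbb{E}\Bigl[\sum_{t}\mathcal{D}(t)(\mu_{t,b}-\mu_{t,a})\mathbf{1}[I_t=a]\Bigr]
\]
and
\[
p_{\mathcal{D}}:=\Pr(I_{\tau_i}=a\mid\sigma^*_{-i})=\mathbb{E}\Bigl[\sum_t\mathcal{D}(t)\mathbf{1}[I_t=a]\Bigr]
\]
are linear functions of $\mathcal{D}$. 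Here the inner expectation is over $\mu\sim\mathcal{P}^i$ and the algorithm's randomness, exactly as in the display used in the proof of Theorem \ref{thm:BIC-swap}. The mixing independence of $\tau_i$ from $(\mu,\text{transcript})$ is the standing assumption that arrival beliefs are independent of reward beliefs.

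Next, I rewrite the hypothesis. The $\epsilon$-BNE condition for component $k$ says that $G_{\mathcal{D}^i_k}/p_{\mathcal{D}^i_k}\le\epsilon$ whenever $p_{\mathcal{D}^i_k}>0$. Equivalently, $G_{\mathcal{D}^i_k}\le\epsilon\, p_{\mathcal{D}^i_k}$, and this also holds trivially when $p_{\mathcal{D}^i_k}=0$, since then the indicator vanishes almost surely and $G_{\mathcal{D}^i_k}=0$. Multiplying by $\lambda_k\ge 0$ and summing gives, by linearity,
\[
G_{\bar{\mathcal D}^i}=\sum_k\lambda_kG_{\mathcal{D}^i_k}\le\epsilon\sum_k\lambda_k p_{\mathcal{D}^i_k}=\epsilon\, p_{\bar{\mathcal D}^i}.
\]
Whenever $p_{\bar{\mathcal D}^i}>0$, dividing yields the conditional bound
\[
\mathbb{E}[\mu_{\tau_i,b}-\mu_{\tau_i,a}\mid I_{\tau_i}=a,\sigma^*_{-i}]\le\epsilon
\]
under $\bar{\mathcal D}^i$. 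Since $a$, $b$, and hence every deviation $\sigma_i'$ are arbitrary, this gives the $\epsilon$-BNE for agent $i$. The other agents' conditions are unaffected because their beliefs are unchanged.

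The main point requiring care is the first step: justifying that the joint law of $(\mu,\Pi_T)$ under agent $i$'s beliefs is the same across all the $\mathcal{D}^i_k$, so that the only dependence on the temporal belief is through the weights $\mathcal{D}(t)$. I would argue this by noting two facts. Under $\sigma^*$ every agent plays identically, so the hypothetical transcript is generated by running $\mathcal{A}$ against rewards drawn from $\mu$, regardless of the full arrival belief. Moreover, $\tau_i$ is drawn independently of $\mu$. A secondary subtlety is that $\epsilon$ should be a single common value across the $k$. If the components carried different $\epsilon_k$, the same computation would give
\[
G_{\bar{\mathcal D}^i}\le\sum_k\lambda_k\epsilon_k p_{\mathcal{D}^i_k}\le\max_k\epsilon_k\, p_{\bar{\mathcal D}^i},
\]
so the lemma as stated, with a shared $\epsilon$, follows directly.
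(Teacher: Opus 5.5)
Your proposal is correct and is essentially the paper's argument. The paper introduces a latent component index $K\sim\lambda$ and writes the mixture's conditional gain as a posterior-weighted average $\sum_k \frac{\lambda_k p_k}{\sum_\ell \lambda_\ell p_\ell}\cdot\frac{G_k}{p_k}$ of the component conditional gains (with $G_k, p_k$ standing for your $G_{\mathcal{D}^i_k}, p_{\mathcal{D}^i_k}$). This is exactly your ratio $\sum_k\lambda_k G_k/\sum_k\lambda_k p_k$ after clearing denominators, and the paper's assignment of zero posterior weight to components with $p_k=0$ plays the role of your observation that $p_k=0$ forces $G_k=0$. Your explicit justification that the joint law of $(\mu,\Pi_T)$ under $\sigma^*_{-i}$ does not depend on the component is the fact the paper uses implicitly when it identifies $\mathbb{E}[X_{a,b}\mid I_{\tau_i}=a, K=k,\sigma^*_{-i}]$ with the conditional gain in the $k$-th instance.
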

\begin{proof}
    Defining the random variable $K$ sampled as $K \sim \lambda$ to choose a base prior $D^i_K$, and then sampling from $D^i_K$ is equivalent to sampling from the mixture prior $\bar{\mathcal D}^i$. Fix two actions $a,b \in A$ and let the random variable $X_{a,b} := \mu_{\tau_i,b} - \mu_{\tau_i,a}$. By the law of total expectation under the mixture,
\begin{equation*}
\mathbb{E}\!\left[X_{a,b} \mid I_{\tau_i} = a, \sigma_{-i}^{*}\right]
= \sum_{k=1}^m \Pr(K=k \mid I_{\tau_i} = a, \sigma_{-i}^*)\;
\mathbb{E}\!\left[X_{a,b} \mid I_{\tau_i} = a,\,  K= k, \sigma_{-i}^*\right],
\end{equation*}
where
\begin{equation*}
\Pr(K=k \mid I_{\tau_i} = a, \sigma_{-i}^*)
=
\frac{\lambda_k\,\Pr(I_{\tau_i} = a\mid K = k,\sigma_{-i}^*)}{\sum_{l=1}^m \lambda_\ell\,\Pr(I_{\tau_i}= a\mid K = l, \sigma_{-i}^*)}.
\end{equation*}
The weights $\Pr(K=k \mid I_{\tau_i} = a, \sigma_{-i}^*)$ thus are nonnegative and sum to $1$ whenever $\Pr(I_{\tau_i} = a\mid \sigma_{-i}^*)>0$; moreover, any $k$ with $\Pr(I_{\tau_i} = a\mid  K = k, \sigma_{-i}^*)=0$ receives posterior weight $0$.

By assumption, each conditional expectation $\mathbb{E}\!\left[X_{a,b} \mid I_{\tau_i} = a,\,  K= k, \sigma_{-i}^*\right]$ in the sum is at most $\epsilon$.
Therefore,
\begin{equation*}
\mathbb{E}\!\left[X_{a,b} \mid I_{\tau_i} = a, \sigma_{-i}^{*}\right] \;\le\;
\sum_{k=1}^m  \Pr(K=k \mid I_{\tau_i} = a, \sigma_{-i}^*)\;\epsilon \;=\; \epsilon,
\end{equation*}
which completes the proof.
\end{proof}

Although we stated Lemma \ref{lem:mixtures-bic} in terms of mixing the beliefs of a single agent (while keeping other agents' beliefs fixed), the same argument extends immediately to simultaneous mixing by all agents. That is, if every agent has a collection of base beliefs, and, for every combination of these beliefs, $\mathcal{A}$ gives an incentive-compatible $\epsilon$-BNE, then so does it when each agent chooses from the convex hull of their base beliefs. This observation allows us to restrict our attention and analysis to a small collection of well-behaved distributions. In this paper, we focus on agent temporal beliefs that are uniform over a contiguous subsequence of length $L$ over the full time horizon. For ease of notation, we define $\mathcal{U}_{s,L}$ as the uniform distribution over time-steps $\{s, s+1, \cdots, s+L-1\}$. With these distributions, it is straightforward to compute relevant quantities: 
\begin{equation}
\label{eq:uniform-terms}
    \Psi_{\max}(\mathcal{U}_{s,L}) = \frac{L-1}{2}, \;\;\;\;\;\;\;\Phi(\mathcal{U}_{s,L}) = \frac{L^2-1}{3L}, \;\;\;\;\;\;\; ||\mathcal{U}_{s,L}||_2^2 = \frac{1}{L} 
\end{equation}
Note that the $\mathcal{U}_{s,L}$-weighted external regret is simply the \textit{average} regret on the corresponding interval:
\begin{equation*}
    \textrm{Reg}_{\mathcal{U}_{s,L}}(\Pi_T) = \frac{1}{L} \max_{a \in A} \sum_{t=s}^{s+L-1} (u_{t,a} - u_{t,I_t})
\end{equation*}
where the summation on the right is unweighted external regret. In this setting, an interval regret bound on $\{s, \cdots, s+L-1\}$ translates to a $\mathcal{U}_{s,L}$-weighted external regret guarantee scaled down by $L$.

The problem of achieving sublinear regret bounds simultaneously on subsequences up to a particular length is a well-studied problem --- a form of regret known as \textit{adaptive regret} \citep{hazanseshadri_adaptive}. \citet{luo_intervalreg} and \citet{adaptiveregret_opt} separately derive algorithms with adaptive regret bounds in the bandit setting.
\begin{theorem}[\citet{luo_intervalreg}, Theorem 2] 
\label{thm:luo}
In the standard adversarial bandits setting with $K$ actions, Exp4.S with parameter $L$ generates a series of recommendations $\{I_t\}_{t=1}^T$ such that for any contiguous time interval $\mathcal{I}$ with $|\mathcal{I}| \leq L$, any sequence of reward vectors $\{u_t\}_{t=1}^T$, and any action $a \in A$, we have
$\mathbb{E}\left[\sum_{t \in \mathcal{I}} u_{t,a} - u_{t,I_t}\right] \leq O(\sqrt{LK\ln(LK)})$, where expectation is with respect to both the algorithm and the environment. 
\end{theorem}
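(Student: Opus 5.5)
This result is imported from \citet{luo_intervalreg}; here is how I would reprove it directly. The plan is to exhibit Exp4.S as exponential weights with a \emph{fixed-share} mixing step, run on importance-weighted reward estimates. I would then show that fixed share makes every interval look like a fresh start with a bounded ``initial handicap.'' With $K$ actions and no context, Exp4.S reduces to Exp3 with fixed share (Exp3.S), so I would analyze that case. The algorithm maintains a distribution $p_t$ over $A$ and draws $I_t\sim p_t$. It forms the estimate $\hat u_{t,a}=\frac{u_{t,a}}{p_{t,a}}\mathbf 1[I_t=a]$, or the loss version $\hat\ell_{t,a}=\frac{(1-u_{t,a})}{p_{t,a}}\mathbf 1[I_t=a]$ to keep estimates nonnegative. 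It then updates $w_{t+1,a}\propto p_{t,a}e^{-\eta\hat\ell_{t,a}}$ and mixes $p_{t+1}=(1-\gamma)\bar w_{t+1}+\gamma\,\mathbf 1/K$.

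The core step is a per-interval potential argument. Fix an interval $\mathcal I=\{s,\dots,s+|\mathcal I|-1\}$ with $|\mathcal I|\le L$ and a comparator $a$. Because of the mixing, $p_{s,a}\ge\gamma/K$ regardless of history. Running the usual Hedge analysis starting from time $s$, with $p_s$ as the initial distribution, gives
\[
\sum_{t\in\mathcal I}\langle p_t,\hat\ell_t\rangle-\sum_{t\in\mathcal I}\hat\ell_{t,a}\;\le\;\frac{\ln(K/\gamma)}{\eta}+\eta\sum_{t\in\mathcal I}\sum_{b}p_{t,b}\hat\ell_{t,b}^2+\frac{|\mathcal I|\ln\frac{1}{1-\gamma}}{\eta}.
\]
The last term charges the mixing step, since each mixing step can shrink the comparator's weight by at most a factor $1-\gamma$. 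I would bound it by roughly $\gamma|\mathcal I|/\eta$.

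Next I would take expectations. Conditional on $\mathcal F_{t-1}$, $p_t$ is fixed and $u_t$ is chosen independently of $I_t$, even by an adaptive adversary, so $\mathbb E[\hat\ell_{t,b}\mid\mathcal F_{t-1}]=\ell_{t,b}$. The second-moment term satisfies $\mathbb E[\sum_b p_{t,b}\hat\ell_{t,b}^2\mid\mathcal F_{t-1}]\le K$. The expected interval regret is therefore at most $\frac{\ln(K/\gamma)+\gamma L}{\eta}+\eta KL$. Choosing $\gamma=1/L$ and $\eta=\sqrt{\ln(KL)/(KL)}$ gives $O(\sqrt{LK\ln(LK)})$.

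The main obstacle is making this analysis hold \emph{simultaneously} for every interval and comparator while using a single fixed tuning. The key point is that the bound above uses only the pointwise floor $p_{s,a}\ge\gamma/K$ at the start of the interval, not any property of earlier play. So no union bound is needed, and the guarantee is in expectation for each fixed interval and action, as stated in Theorem~\ref{thm:luo}. A secondary care point is the mixing-cost term. I would handle it by writing the potential as $\ln W$ after the mixing step and using $\ln\frac{1}{1-\gamma}\le 2\gamma$ for $\gamma\le 1/2$. This term is the reason $\gamma$ must scale like $1/L$, which in turn produces the $\ln(LK)$ factor rather than $\ln K$.
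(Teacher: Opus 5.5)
The paper gives no proof of this statement and simply imports it from \citet{luo_intervalreg}. Your argument is correct and is essentially the standard fixed-share analysis behind that Exp4.S bound, specialized to the $K$ constant policies: exponential weights on importance-weighted losses, using only the floor $p_{s,a}\ge\gamma/K$ at the start of the interval plus a per-round mixing cost of $\ln\frac{1}{1-\gamma}$, then tuning $\gamma=1/L$ and $\eta=\sqrt{\ln(KL)/(KL)}$. One clarification on why you use losses rather than rewards: both estimators are nonnegative, so nonnegativity is not the point. With losses, $e^{-x}\le 1-x+x^2/2$ holds for all $x\ge 0$, so you never need $\eta\hat\ell_{t,b}\le 1$. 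That condition would fail at your tuning, because $\hat\ell_{t,b}$ can be as large as $K/\gamma=KL$.
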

Setting $L = T$ gives regret bounds on the order of $\tilde{O}(\sqrt{TK})$ uniformly across all contiguous intervals. If one is interested only in intervals smaller than some fixed $L$, the sharper rates of $\tilde{O}(\sqrt{LK})$ are achievable for intervals of length at most $L$, but not uniformly across all (larger) intervals\footnote{Note that \textit{strong} adaptive regret bounds that uniformly achieve $\tilde{O}(\sqrt{|I|})$ bounds for all intervals $I$ are not achievable in the single-query MAB setting \citep{daniely_strongadaptive}.}. We can consider both these regimes to derive concrete bounds for approximate IC using \textit{existing} algorithms with strong regret guarantees.
\begin{theorem}
\label{thm:uniform-first}
    Define the set $U = \{\mathcal{U}_{s,L}: L \in [T], s \in [T - L + 1]\}$ as the set of uniform distributions over all possible contiguous intervals over $[T]$. Fix a collection of agents with beliefs $\{(\mathcal{P}^i, \mathcal{D}^i)\}_{i=1}^T$ all of which satisfy Assumptions \ref{ass:adversarial1} and \ref{ass:slowmoving} with parameters $(\alpha, \Delta,\rho)$, and such that $\mathcal{D}^i \in U$ for all $i \in [T]$. If agent $i$'s temporal belief is uniform over a subsequence of length $L$, then Exp4.S with parameter $T$ implements an incentive-compatible Bayes $\epsilon(T)$-Nash equilibrium where the error term for agent $i$ is:
    \begin{equation*}
        \epsilon_i(T) = \frac{\left(O\left(\frac{\sqrt{TK\ln(TK)}}{L}\right) + \frac{2\rho(L^2 - 1)}{3L} + 2\sqrt{\frac{2\ln(2K)}{L}}\right)(\Delta - \rho(L-1))}{\alpha\left(\Delta - O\left(\frac{\sqrt{TK\ln(TK)}}{L}\right) - \rho(L-1) - 2\sqrt{\frac{2\ln(2K)}{L}}\right)}
    \end{equation*}
    Specifically, when $L = \tilde{\omega}(\sqrt{T})$, and $\rho = o(1/L)$, we have vanishing error $\epsilon_i(T) = o(1)$. 
\end{theorem}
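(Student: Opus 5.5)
The plan is to instantiate Theorem \ref{thm:BIC-ext} with $\mathcal{D}^i = \mathcal{U}_{s,L}$, using the $\eta_\Psi$ branch of the minimum. Since $\epsilon_i(T)$ is a minimum that includes $\eta_\Psi(T)$, any explicit upper bound on $\eta_\Psi$ is a valid error term.

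\textbf{Regret term.} Run Exp4.S with parameter $T$. Every contiguous interval $\{s,\dots,s+L-1\}$ has length at most $T$, so Theorem \ref{thm:luo} gives $\mathbb{E}[\max_a\sum_{t=s}^{s+L-1}(u_{t,a}-u_{t,I_t})] \le O(\sqrt{TK\ln(TK)})$. The theorem is stated for each fixed $a$ and the definition of $\mathrm{Reg}_{\mathcal{D}}$ has the max inside the expectation. I will use the fact that the $O(\cdot)$ bound is for the maximum, or else pay a union/standard high-probability factor absorbed in the $O$. This is the only subtle point.

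Dividing by $L$ gives $R^{(\mathcal{U}_{s,L})}(T) = O(\sqrt{TK\ln(TK)}/L)$. This holds under any reward sequence, in particular under the hypothetical reward sequences an agent simulates from $\mu\sim\mathcal P^i$. Those sequences are drawn independently of the algorithm, so an oblivious-adversary guarantee applied conditionally on $\mu$ suffices. Under the compliant profile $\sigma^*$ the recommendations are actually played, so the guarantee applies.

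\textbf{Substitution.} From \eqref{eq:uniform-terms}:
\begin{itemize}
\item $2\rho\Psi_{\max}=\rho(L-1)$,
\item $2\rho\Phi=2\rho(L^2-1)/(3L)$,
\item $2\sqrt{2W_2\ln(2K)}=2\sqrt{2\ln(2K)/L}$.
\end{itemize}
Plugging these into $\eta_\Psi(T)$ from Theorem \ref{thm:BIC-ext} gives exactly the displayed expression. Every agent's belief lies in $U$ and the single algorithm's guarantee holds simultaneously for all intervals, so the equilibrium statement is for all agents at once.

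\textbf{Asymptotics.} Take $L=\tilde\omega(\sqrt T)$, with $K$ fixed and the polylog slack absorbing $\sqrt{\ln(TK)}$. Then $\sqrt{TK\ln(TK)}/L=o(1)$ and $\sqrt{\ln(2K)/L}=o(1)$. If $\rho=o(1/L)$, then $\rho(L-1)=o(1)$ and $\rho(L^2-1)/(3L)=o(1)$. So the numerator's first factor is $o(1)$, its second factor is at most $\Delta$, and the denominator tends to $\alpha\Delta>0$ for fixed $\alpha,\Delta$. In particular the denominator is eventually positive, so $\eta_\Psi$ is well-defined and $\epsilon_i(T)=o(1)$.
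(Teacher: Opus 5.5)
Your proposal is correct and is essentially the paper's proof: the paper also substitutes the Exp4.S interval bound of Theorem \ref{thm:luo} (divided by $L$) and the uniform-window quantities in \eqref{eq:uniform-terms} into $\eta_\Psi$ from Theorem \ref{thm:BIC-ext}, then reads off the asymptotics for $L=\tilde\omega(\sqrt{T})$ and $\rho=o(1/L)$. The paper passes over the max-inside-the-expectation point you flag, and the cleanest fix is to condition on the entire realized reward table rather than just on $\mu$ (under the agent's model this table is independent of the algorithm's coins), so that each $\sum_t \tilde u_{t,a}$ is fixed and the maximum over $a$ passes through the expectation; your fallback of absorbing the $O(\sqrt{L\ln K})$ discrepancy into the $O(\sqrt{TK\ln(TK)})$ term also works.
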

\begin{proof}
Substituting in the bounds from Theorem \ref{thm:luo} and computed terms from line (\ref{eq:uniform-terms}) gives the result. It is clear to see that when $L$ grows faster than $\sqrt{T\ln(T)}$, the regret bounds go to zero, and when $\rho$ grows slower than $1/L$, all the dispersion terms vanish. 
\end{proof}
\begin{remark}[Global Uncertainty via Local Guarantees]
\label{rem:global-uncertainty}
    This theorem in conjunction with Lemma \ref{lem:mixtures-bic} has a powerful implication for agents with large uncertainty windows. At first glance, the drift requirement $\rho = o(1/L)$ may seem prohibitive. For example, consider an agent with \textit{global} uniform temporal belief $\mathcal{U}_{1,T}$ over the \textit{entire} horizon --- Theorem \ref{thm:uniform-first} would seem to suggest that we require the drift $\rho$ to be negligible relative to the horizon (roughly, $\rho \ll 1/T)$, effectively requiring a static environment for non-trivial incentive guarantees. However, we can represent $\mathcal{U}_{1,T}$ as a convex combination of uniform distributions over much shorter lengths $L$ (as long as $L = \tilde{\omega}(\sqrt{T})$)\footnote{Even if $T$ is not a multiple of the chosen $L$, we can still easily achieve an exact convex decomposition using blocks whose lengths differ from $L$ by at most a constant (e.g. blocks of length $L$ and $L+1$) and recover the same result since these blocks asymptotically incur error differing only by a constant factor.}. Applying the theorem above, we can achieve an IC-BNE for each of these component distributions, requiring only that $\rho \ll 1 / L \approx 1/\sqrt{T}$. Then, by Lemma \ref{lem:mixtures-bic}, this guarantee extends to the global belief $\mathcal{U}_{1,T}$ (and more generally, to any uniform belief over an interval of length at least $L$) without strengthening the drift requirement to $\rho = o(1/T)$. Equivalently, choosing $L = \tilde{\omega}(\sqrt{T})$ allows dynamic environments with drift as large as $\rho = \tilde{o}(T^{-1/2})$ while still obtaining $\epsilon = o(1)$ incentive compatibility for agents with global uncertainty, significantly relaxing the stationarity requirements. 
\end{remark}

This setting facilitates a rich class of belief distributions, allowing incoming agents to believe themselves uniform on \textit{any} contiguous subsequence of the history, achieving vanishing IC error even in dynamic environments. However, for bounds to be meaningful, the length of this subsequence must be sufficiently long --- roughly, larger than $\sqrt{T}$. If one is interested instead in agents whose temporal beliefs are on windows of a \textit{fixed} length $L$ which is a smaller growing function of $T$, we can tune Exp4.S to this target and leverage the sharper bounds from Theorem \ref{thm:luo} --- at the cost of not guaranteeing performance on substantially longer intervals:  
\begin{theorem}
    Define the set $U_L = \{\mathcal{U}_{s,L}: s \in [T - L + 1]\}$ as the set of uniform distributions over all contiguous intervals $I$ such that $|I| = L$. Fix a collection of agents with beliefs $\{(\mathcal{P}^i, \mathcal{D}^i)\}_{i=1}^T$ all of which satisfy Assumptions \ref{ass:adversarial1} and \ref{ass:slowmoving} with parameters $(\alpha, \Delta,\rho)$, and such that $\mathcal{D}^i \in U_L$ for all $i \in [T]$. Exp4.S with parameter $L$ implements an incentive-compatible Bayes $\epsilon(T)$-Nash equilibrium where the error term for agent $i$ is:
    \begin{equation*}
        \epsilon_i(T) = \frac{\left(\tilde{O}\left(L^{-1/2}\right) + \frac{2\rho(L^2 - 1)}{3L} \right)(\Delta - \rho(L-1))}{\alpha\left(\Delta - \tilde{O}(L^{-1/2}) - \rho(L-1)\right)}
    \end{equation*}
    When $\rho = o(1/L)$, we have vanishing error $\epsilon_i(T) = o(1)$. 
\end{theorem}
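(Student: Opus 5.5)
This follows the pattern of the proof of Theorem \ref{thm:uniform-first}. We instantiate the $\eta_{\Psi}$ branch of Theorem \ref{thm:BIC-ext} with $\mathcal{D}^i = \mathcal{U}_{s,L}$ and plug in the regret bound of Exp4.S tuned to parameter $L$. Fix an agent $i$ whose temporal belief is $\mathcal{U}_{s,L}$ for some $s \in [T-L+1]$. The relevant interval $\{s,\dots,s+L-1\}$ has length exactly $L$. Theorem \ref{thm:luo} therefore applies to this interval and gives, for every fixed $a$,
\[
\mathbb{E}\Bigl[\textstyle\sum_{t=s}^{s+L-1} (u_{t,a}-u_{t,I_t})\Bigr] = O\bigl(\sqrt{LK\ln(LK)}\bigr).
\]

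The first step is to convert this into a bound on the expected $\mathcal{U}_{s,L}$-weighted external regret $R^{(\mathcal{U}_{s,L})}(T)$. Dividing by $L$ handles the weighting. However, Theorem \ref{thm:luo} is stated for a fixed comparator $a$, while Definition \ref{def:realized-regret} places the max inside the expectation. I would handle this the same way Theorem \ref{thm:uniform-first} implicitly does, by reading the Exp4.S guarantee as a bound on expected interval regret. If that reading is not available, I would absorb the max-versus-expectation gap through the high-probability form of the interval guarantee, or through a martingale/union bound over $K$ arms in the style of Lemma \ref{lem:pseudo-from-realized}. Either route costs only $O(\sqrt{L\ln K})$ in unnormalized terms, so it does not change the rate. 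Either way, $R^{(\mathcal{U}_{s,L})}(T) = \tilde{O}(L^{-1/2})$, treating $K$ as a constant hidden in the $\tilde O$.

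Next, I substitute the quantities from (\ref{eq:uniform-terms}): $\Psi_{\max} = (L-1)/2$, so $2\rho\Psi_{\max} = \rho(L-1)$; $2\rho\Phi = 2\rho(L^2-1)/(3L)$; and $W_2 = 1/L$, so the concentration term $2\sqrt{2\ln(2K)/L}$ is itself $\tilde O(L^{-1/2})$ and merges into the regret term. Plugging these into $\eta_{\Psi}(T)$ from Theorem \ref{thm:BIC-ext} gives exactly the displayed $\epsilon_i(T)$. Since every agent's belief lies in $U_L$ and Exp4.S does not depend on $s$, the bound holds simultaneously for all agents, so the compliant profile is an $\epsilon$-BNE.

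Finally, for the asymptotic claim: if $\rho = o(1/L)$, then $\rho(L-1) = o(1)$ and $2\rho(L^2-1)/(3L) = o(1)$. Assuming $L \to \infty$ with $T$ (implicit in calling $L$ a growing function), $\tilde O(L^{-1/2}) = o(1)$ as well. The denominator then tends to $\alpha\Delta > 0$ and the numerator tends to $0$, so $\epsilon_i(T) = o(1)$. The only real obstacle is the first step, justifying that the per-comparator interval guarantee controls the max-inside-expectation weighted regret. I expect the remaining steps to be routine substitution.
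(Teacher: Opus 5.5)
Your proposal is correct and follows the paper's argument: as in the proof of Theorem~\ref{thm:uniform-first}, you substitute the Exp4.S interval bound of Theorem~\ref{thm:luo} (normalized by $L$) and the uniform-window quantities in (\ref{eq:uniform-terms}) into the $\eta_{\Psi}$ branch of Theorem~\ref{thm:BIC-ext}, and absorb the concentration term into $\tilde{O}(L^{-1/2})$. The paper glosses over the max-versus-expectation issue you flag, and your martingale/union-bound fix works; in fact the fixed-comparator guarantee already suffices, because Lemma~\ref{lem:prob-bound-max1} and Theorem~\ref{thm:BIC-ext} only apply the regret bound to pseudo-regret, whose best comparator $\arg\max_a \sum_t \mathcal{D}^i(t)\mu_{t,a}$ is determined by $\mu$ alone.
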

This instantiation of Exp4.S implements an IC-BNE for agents with uniform beliefs over windows of length $L$, for any (growing) choice of $L$. As in the previous setting, this incentive guarantee extends to mixtures over $U_L$, which are potentially more diffuse. However, the corresponding regret guarantees are local, and do not, in general, imply sublinear regret over the full horizon. 

These results give us a concrete picture of what incentive error looks like for a canonical family of temporal beliefs (uniform windows). More broadly, they illustrate a useful analysis principle: it suffices to establish incentive guarantees for a structured \textit{basis class} of belief distributions, in order to ensure the same guarantees for the richer class of mixtures. This suggests two natural questions: (i) Are there other, potentially more expressive, basis classes of distributions that yield good IC bounds, and (ii) Do agents whose temporal beliefs are \textit{close} (by some distance metric) to the convex hull also inherit similar IC properties? We do not fully answer these questions here --- good approximation families for temporal distributions remain an area of interest --- but we offer a preliminary result in Appendix \ref{app:approx-dist} on how IC guarantees extend to temporal distributions that are approximated by the convex hull of the basis set.

\section{Conclusion}
We study Bayesian incentivized exploration in the setting where agents have beliefs not only over rewards but also over arrival time. We formalize a connection between \textit{swap regret} and incentive compatibility; zero swap-regret gives exact incentive-compatibility to agents who believe that their arrival time is uniformly random, and more generally, $\mathcal{D}$-weighted swap regret translates to approximate incentive-compatibility for agents with temporal belief $\mathcal{D}$. When agents believe rewards are slowly-moving, \textit{external} regret is sufficient for a similar guarantee. These results give a prior-agnostic route to incentive-compatible exploration, where agents may hold a diverse collection of beliefs and the mechanism designer need not be aware of any of them. It also opens the door to using existing algorithms in the rich literature on online learning which are \textit{already} approximately incentive-compatible, directly \textit{via} their regret guarantees; we illustrate this by example. The fact that we support any arrival belief in the \emph{convex hull} of the set of distributions over which we offer weighted regret lets us give guarantees for substantially non-stationary environments: for example, by offering adaptive regret guarantees (weighted regret guarantees over all contiguous subsequences of length $\approx \sqrt{T}$) we automatically give incentive compatibility guarantees to agents who believe that their arrival time is uniform over even larger subsequences, while accomodating substantially non-trivial reward-movement parameters $\rho \approx 1/\sqrt{T}$.  

Our bounds have an explicit dependence on dispersion measures of the temporal distribution, which are not always transparent. One natural avenue for future work is to characterize these dispersion terms for broad families of temporal distributions, thereby yielding sharper and more interpretable guarantees. Another interesting direction could be to identify a tractable, expressive \textit{basis class} of distributions (beyond uniform distributions of length $L$) whose convex hull covers or approximates a very broad set of temporal beliefs, so that it suffices to establish weighted regret guarantees on this class. 

\bibliographystyle{plainnat}
\bibliography{refs}

\appendix
\section{Additional Proofs}
\label{sec:proofs}
Here we provide proof details for the lemmas from Section \ref{subsec:lemmas} whose proofs were deferred from the main body.

\lemmaone* 
\begin{proof}
For any fixed $s \in [T]$, we have $ |\mu_{t,a} - \mu_{s,a}| \le \rho\,|t-s|$. Taking expectation over 
$s\sim\mathcal{D}$ gives
\begin{equation*}
\mu_{t,a} \geq \mathbb{E}_{s \sim \mathcal{D}}[\mu_{s,a}] - \rho\Psi_\mathcal{D}(t)
\end{equation*}
and the same argument for $b$ gives
$ \mu_{t,b} \le \mathbb{E}_{s \sim \mathcal{D}}[\mu_{s,b}]+ \rho\,\Psi_{\mathcal{D}}(t)$.
Subtracting from the first inequality,
\begin{equation*}
    \mu_{t,a} - \mu_{t,b} \geq \mathbb{E}_{s \sim \mathcal{D}}[\mu_{s,a} - \mu_{s,b}] - 2\rho\Psi_\mathcal{D}(t) \geq \text{gap}_\mathcal{\mu, \mathcal{D}}(a) - 2\rho\Psi_\mathcal{D}(t).
\end{equation*}
\end{proof}

\lemmatwo*
\begin{proof}
    As in Lemma \ref{lem:pseudo-from-realized}, for all $t \in [T]$, define $\mathcal{F}_{t-1}$ as the $\sigma$-field generated by the history up to time $t-1$ (when running $\mathcal{A}$ on the random reward sequence generated according to $\mu$). Since $\tilde{I}_t$ is $\mathcal{F}_{t-1}$-measurable, we have:
    \begin{equation*}
        \mathbb{E}[\tilde{u}_{t, \phi(\tilde{I}_t)} - \tilde{u}_{t, \tilde{I}_t} | \mathcal{F}_{t-1}] = \mu_{t, \phi(\tilde{I}_t)} - \mu_{t, \tilde{I}_t}
    \end{equation*}
    for all $t \in [T]$. Taking the weighted sum (according to $\mathcal{D}$) and then taking expectation over algorithm randomness gives:
    \begin{equation*}
       \mathbb{E}\left[\sum_{t=1}^T \mathcal{D}(t) (\mu_{t, \phi(\tilde{I}_t)} - \mu_{t, \tilde{I}_t})\right]  = \mathbb{E}\left[\sum_{t=1}^T \mathcal{D}(t) (u_{t, \phi(\tilde{I}_t)} - u_{t, \tilde{I}_t})\right]   
    \end{equation*}
    The right-hand-side is upper-bounded by expected swap-regret, which gives the result. 
\end{proof}

\lemmathree*
\begin{proof}
Proceeding as in Lemma \ref{lem:prob-bound-max1}, we fix an action $a \in A$, and define the event $E_a := \{\mu: \text{gap}_{\mu, \mathcal{D}^i}(a) \geq \Delta \}$. For any reward instance $\mu \in E_a$, let $\mathcal{A}$ generate the hypothetical transcript $\tilde{\Pi}_T = \{(\tilde{I}_t, \tilde{a}_t, \tilde{u}_t)\}_{t=1}^T$ under rewards generated by $\mu$. By Lemma \ref{lem:ubound-general},
\begin{equation*}
    \mu_{t,a} - \mu_{t,b} \geq \Delta - 2\rho\Psi_{\mathcal{D}^i}(t)
\end{equation*}
for any $b \neq a$, and so:
\begin{equation*}
    \mu_{t,a} - \mu_{t,\tilde{I}_t} \geq (\Delta - 2\rho\Psi_{\mathcal{D}^i}(t))\mathbf{1}[\tilde{I}_t \neq a]
\end{equation*}
for all $t \in [T]$. Taking sums on both sides weighted by $\mathcal{D}^i$,
\begin{align*}
    \sum_{t=1}^T \mathcal{D}^i(t)\left(\mu_{t,a} - \mu_{t,\tilde{I}_t} \right) &\geq \Delta \sum_{t =1}^T \mathcal{D}^i(t) \mathbf{1}[\tilde{I}_t \neq a] - 2\rho \sum_{t =1}^T \mathcal{D}^i(t) \Psi_{\mathcal{D}^i}(t) \mathbf{1}[\tilde{I}_t \neq a] \\
    &\geq \Delta \sum_{t =1}^T \mathcal{D}^i(t) \mathbf{1}[\tilde{I}_t \neq a] - 2\rho\Phi(\mathcal{D}^i)
\end{align*}
because $\sum_{t=1}^T \mathcal{D}^i(t) \Psi_{\mathcal{D}^i}(t) \mathbf{1}[\tilde{I}_t \neq a] \leq \sum_{t=1}^T \mathcal{D}^i(t) \Psi_{\mathcal{D}^i}(t) = \Phi(\mathcal{D}^i)$.
Again, we bound the left-most term above by pseudo-regret:
\begin{equation*}
    \textrm{PReg}_{\mathcal{D}^i}(\tilde{\Pi}_T, \mu) \geq \Delta \sum_{t =1}^T \mathcal{D}^i(t) \mathbf{1}[\tilde{I}_t \neq a] - 2\rho \Phi(\mathcal{D}^i)
\end{equation*}
Taking expectation over algorithm randomness and using Lemma \ref{lem:pseudo-from-realized},
\begin{align*}
     \R(T) + 2\sqrt{2\,W_2(\mathcal D^i)\,\ln(2K)} &\geq \Delta \cdot \mathbb{E}_\mathcal{A}\left[\sum_{t=1}^T \mathcal{D}^i(t) \mathbf{1}[\tilde{I}_t \neq a]\right] - 2\rho\Phi(\mathcal{D}^i) \\
     &= \Delta(1 - \mathbb{E}_{\mathcal{A}}[N_a]) - 2\rho\Phi(\mathcal{D}^i)
\end{align*}
where $N_a = \sum_{t =1}^T \mathcal{D}^i(t) \mathbf{1}[\tilde{I}_t = a]$ is the probability under $\mathcal{D}^i$ of a recommended action being $a$. Rearranging, we get
\begin{equation*}
    \mathbb{E}_\mathcal{A}[N_a] \geq 1 - \frac{\R(T) + 2\sqrt{2\,W_2(\mathcal D^i)\,\ln(2K)} + 2\rho\Phi(\mathcal{D}^i)}{\Delta}
\end{equation*}
This bound is satisfied for every $\mu \in E_a$, and by Assumption \ref{ass:adversarial1}, $\Pr(E_a) \geq \alpha$. Since $N_a \geq 0$ by definition, taking an expectation also over $\mathcal{P}$, we have
\begin{align*}
    \mathbb{E}[N_a] \geq \Pr(E_a)\;\mathbb{E}[N_a |\; E_a] &\geq \alpha\left( 1 - \frac{\R(T) + 2\sqrt{2\,W_2(\mathcal D^i)\,\ln(2K)} + 2\rho\Phi(\mathcal{D}^i)}{\Delta}\right)
\end{align*}
\end{proof}

\section{Extensions for Approximated Temporal Beliefs}
\label{app:approx-dist}
In our main results we tied weighted regret guarantees to approximate incentive-compatibility, and the specific weightings are tied closely to our error bounds. Either we must know the agents' temporal distributions in advance, or we must assume they are in the convex hull of some basis distribution that we provide weighted-regret guarantees with respect to. Here, we note that since the weighted regrets of two ``close'' distributions (in terms of total variation) will also be close, it suffices for the agents' temporal distributions to be close to \textit{some} distribution in the convex hull of the basis distribution in order for us to say something about incentive-compatibility.

\begin{lemma}
\label{lem:tv-regret-transfer}
Let $\mathcal D_1,\mathcal D_2$ be two temporal beliefs. For any realized transcript $\Pi_T = \{(I_t, a_t, u_t)\}_{t=1}^T$ generated by some algorithm $\mathcal{A}$,
\begin{equation*}
|\mathrm{Reg}_{\mathcal D_2}(\Pi_T) -
\mathrm{Reg}_{\mathcal D_1}(\Pi_T)| \leq \;2\,d_{\mathrm{TV}}(\mathcal D_1,\mathcal D_2)
\end{equation*}
where $d_{\mathrm{TV}}(\mathcal D_1,\mathcal D_2)
:= \tfrac12\sum_{t\in[T]}|\mathcal D_1(t)-\mathcal D_2(t)|$ is the total variation distance between the two distributions. Consequently,
\begin{equation*}
\left|\mathbb{E}[\mathrm{Reg}_{\mathcal D_2}(\Pi_T)] -
\mathbb{E}[\mathrm{Reg}_{\mathcal D_1}(\Pi_T)]\right| \leq \;2\,d_{\mathrm{TV}}(\mathcal D_1,\mathcal D_2),
\end{equation*}
taking expectation over algorithmic randomness.
\end{lemma}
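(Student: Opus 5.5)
The plan is to reduce the statement to a pointwise bound on each fixed-comparator weighted sum, and then pass through the maximum over actions. For a fixed transcript $\Pi_T$ and a fixed action $a\in A$, define
\[
f_j(a) := \sum_{t=1}^T \mathcal D_j(t)\bigl(u_{t,a}-u_{t,I_t}\bigr), \qquad j\in\{1,2\},
\]
so that $\mathrm{Reg}_{\mathcal D_j}(\Pi_T)=\max_{a\in A} f_j(a)$.

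The first step is to bound the difference for a fixed action. Since rewards lie in $[0,1]$, each coefficient satisfies $|u_{t,a}-u_{t,I_t}|\le 1$. Therefore
\[
|f_2(a)-f_1(a)| \le \sum_{t}|\mathcal D_2(t)-\mathcal D_1(t)|\,|u_{t,a}-u_{t,I_t}| \le \sum_t |\mathcal D_2(t)-\mathcal D_1(t)| = 2\,d_{\mathrm{TV}}(\mathcal D_1,\mathcal D_2).
\]

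The second step passes this through the maximum using the elementary fact that $|\max_a f_2(a)-\max_a f_1(a)|\le \max_a |f_2(a)-f_1(a)|$. To see it, let $a^\star$ maximize $f_2$. Then
\[
\max_a f_2(a) - \max_a f_1(a) \le f_2(a^\star)-f_1(a^\star) \le 2d_{\mathrm{TV}}(\mathcal D_1,\mathcal D_2),
\]
and the symmetric argument gives the reverse inequality. This yields the pointwise statement for every realized transcript.

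The third step handles the expectation version. Take expectations over the algorithm's randomness, noting that the bound holds for every realization. Jensen's inequality then gives $|\mathbb E[X]-\mathbb E[Y]|\le \mathbb E|X-Y|\le 2d_{\mathrm{TV}}(\mathcal D_1,\mathcal D_2)$.

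There is no real obstacle here. The only point needing care is the max-difference inequality, and in particular comparing at the maximizer of the larger term in each direction. A slightly sharper constant, $d_{\mathrm{TV}}$ rather than $2d_{\mathrm{TV}}$, might be obtainable by centering the coefficients, but I will not pursue it because it is not needed for the statement.
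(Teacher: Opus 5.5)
Your proof is correct and is essentially the paper's: bound each fixed-comparator difference $f_2(a)-f_1(a)$ by $\sum_t|\mathcal D_2(t)-\mathcal D_1(t)|=2d_{\mathrm{TV}}(\mathcal D_1,\mathcal D_2)$ using $|u_{t,a}-u_{t,I_t}|\le 1$, pass to the maximum over $a$ in both directions, and take expectations realization by realization. One caution on your closing aside: centering cannot lower the constant to $d_{\mathrm{TV}}$, because the coefficients $u_{t,a}-u_{t,I_t}$ span $[-1,1]$. A counterexample: let $\mathcal D_1$ be uniform on $K$ rounds in which each action is played once and only the played action earns $1$, and let $\mathcal D_2$ be a point mass on a round where the played action earns $0$ and another action earns $1$; the two weighted regrets then differ by $2-1/K$ while $d_{\mathrm{TV}}=1$.
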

\begin{proof}
For any fixed sequence of values $\{I_t\}_{t=1}^T$, and any fixed action $a \in A$, we have:
\begin{align*}
    \sum_{t=1}^T \mathcal{D}_2(t) (u_{t,a} - u_{t,I_t}) &= \sum_{t=1}^T \mathcal{D}_1(t) (u_{t,a} - u_{t,I_t}) + \sum_{t=1}^T (\mathcal{D}_2(t) - \mathcal{D}_1(t)) (u_{t,a} - u_{t,I_t}) \\
    &\leq \sum_{t=1}^T \mathcal{D}_1(t) (u_{t,a} - u_{t,I_t}) + \sum_{t=1}^T |(\mathcal{D}_2(t) - \mathcal{D}_1(t))|\cdot|(u_{t,a} - u_{t,I_t})| \\
    &\leq \sum_{t=1}^T \mathcal{D}_1(t) (u_{t,a} - u_{t,I_t}) + \sum_{t=1}^T |(\mathcal{D}_2(t) - \mathcal{D}_1(t))|\\
    &= \sum_{t=1}^T \mathcal{D}_1(t) (u_{t,a} - u_{t,I_t}) + 2d_{\mathrm{TV}}(\mathcal D_1,\mathcal D_2)
\end{align*}
using the fact that $u_t \in [0,1]^K$ for all $t \in [T]$. Taking the max across all actions $a \in A$ on both sides,
\begin{equation*}
    \mathrm{Reg}_{\mathcal{D}_2}(\Pi_T) \leq \mathrm{Reg}_{\mathcal{D}_1}(\Pi_T) + 2d_{\mathrm{TV}}(\mathcal D_1,\mathcal D_2)
\end{equation*}
This is true in both directions, giving the desired result. Since this holds for each realized transcript, it holds after taking expectations as well.
\end{proof}

\begin{corollary}
\label{lem:BIC-approx}
Fix an algorithm $\mathcal{A}$, a class of distributions $\mathcal{C}$, and a collection of agents with reward and temporal beliefs $\{(\mathcal{P}^i, \mathcal{D}^i\}_{i=1}^T$. Suppose there exists $\beta \geq 0$ such that for each $\mathcal{D}^i$, there exists some distribution $C^i \in \text{conv}(\mathcal{C})$ in the convex hull of $\mathcal{C}$ such that $d_{\mathrm{TV}}(\mathcal{D}^i, C^i) \leq \beta$. If an algorithm $\mathcal{A}$ played over $T$ rounds guarantees expected $C$-weighted external regret $B(T)$ uniformly for all $C \in \mathcal{C}$, then $\mathcal{A}$ implements an incentive-compatible Bayes $\epsilon(T)$-Nash equilibrium with
\begin{equation*}
    \epsilon(T) = \min\{\eta_{\Psi}(T), \eta_{\Phi}(T), 1\}.
\end{equation*}
where we define
\begin{equation*}
    \eta_{\Psi}(T) = \frac{(B(T) + 2\beta +  2\rho\Phi(\mathcal{D}^i) + 2\sqrt{2W_2(\mathcal{D}^i)\ln(2K)})(\Delta - 2\rho\Psi_{\text{max}}(\mathcal{D}^i))}{\alpha(\Delta - 2\rho\Psi_{\text{max}}(\mathcal{D}^i) - B(T) - 2\beta -2\sqrt{2W_2(\mathcal{D}^i)\ln(2K)})}
\end{equation*}
\begin{equation*}
    \eta_{\Phi}(T) = \frac{\Delta(B(T) + 2\beta +  2\rho\Phi(\mathcal{D}^i) + 2\sqrt{2W_2(\mathcal{D}^i)\ln(2K)})}{\alpha(\Delta - 2\rho\Phi(\mathcal{D}^i) - B(T) - 2\beta -2\sqrt{2W_2(\mathcal{D}^i)\ln(2K)})}
\end{equation*}
\end{corollary}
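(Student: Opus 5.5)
The plan is to reduce the corollary to Theorem \ref{thm:BIC-ext}. That theorem needs only a bound $R^{(\mathcal{D}^i)}(T)$ on the expected $\mathcal{D}^i$-weighted external regret. I will show that $B(T) + 2\beta$ is such a bound. Substituting $R^{(\mathcal{D}^i)}(T) = B(T) + 2\beta$ into the expressions for $\eta_{\Psi}(T)$ and $\eta_{\Phi}(T)$ in Theorem \ref{thm:BIC-ext} then gives exactly the stated expressions, with the dispersion and concentration terms still computed for $\mathcal{D}^i$ itself. This matters, because Lemmas \ref{lem:pseudo-from-realized}, \ref{lem:ext-to-swap}, \ref{lem:prob-bound-max1} and \ref{lem:prob-bound-max2} are all applied to the agent's actual belief $\mathcal{D}^i$. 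The only ingredient they consume from the algorithm is the external regret bound.

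The first step is to pass from $\mathcal{C}$ to its convex hull. Write $C^i = \sum_k \lambda_k C_k$ with $C_k \in \mathcal{C}$. For each fixed action $a$, the weighted quantity $\sum_t C^i(t)(u_{t,a} - u_{t,I_t})$ is linear in the weighting, so it equals $\sum_k \lambda_k \sum_t C_k(t)(u_{t,a}-u_{t,I_t})$. Maximizing over $a$ gives
\[
\mathrm{Reg}_{C^i}(\Pi_T) \le \sum_k \lambda_k \mathrm{Reg}_{C_k}(\Pi_T),
\]
since a maximum of a sum is at most the sum of maxima. Taking expectations then yields $\mathbb{E}[\mathrm{Reg}_{C^i}(\Pi_T)] \le B(T)$. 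This holds for any transcript, and in particular for the hypothetical transcripts generated under the agent's modeled rewards. The bound is uniform over $\mathcal{C}$, and under $\sigma^*$ the regret guarantee applies to those transcripts, since all agents comply.

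The second step applies Lemma \ref{lem:tv-regret-transfer} with $\mathcal{D}_1 = C^i$ and $\mathcal{D}_2 = \mathcal{D}^i$. This gives
\[
\mathbb{E}[\mathrm{Reg}_{\mathcal{D}^i}(\Pi_T)] \le B(T) + 2\beta.
\]
With this bound in hand, the proof of Theorem \ref{thm:BIC-ext} can be rerun verbatim. The numerator of (\ref{eq:imp-eq-repeated}) is bounded by $B(T)+2\beta+2\rho\Phi(\mathcal{D}^i)+2\sqrt{2W_2(\mathcal{D}^i)\ln(2K)}$, using Lemmas \ref{lem:pseudo-from-realized} and \ref{lem:ext-to-swap}. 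The denominator is bounded below using Lemmas \ref{lem:prob-bound-max1} and \ref{lem:prob-bound-max2} with the same regret quantity. The cap at $1$ is trivial because rewards lie in $[0,1]$.

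The main obstacle is being careful about which transcript each regret guarantee refers to. The guarantee $B(T)$ is worst case over reward sequences, so it applies to the agent's hypothetical transcripts $\tilde{\Pi}_T$. Lemma \ref{lem:tv-regret-transfer} is pathwise, so composing it with the convex-hull step is safe. I will check this before substituting, so that no expectation is taken under the wrong measure.
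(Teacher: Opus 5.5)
Your proposal is correct and follows the paper's proof. Like the paper, you extend the $B(T)$ bound to $\mathrm{conv}(\mathcal{C})$, apply Lemma \ref{lem:tv-regret-transfer} to get $R^{(\mathcal{D}^i)}(T)\le B(T)+2\beta$, and substitute this into Theorem \ref{thm:BIC-ext}; your pathwise argument for the convex-hull step (a maximum of a convex combination is at most the convex combination of the maxima) is just a more careful version of the paper's appeal to ``convexity of expectation.''
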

\begin{proof}
    By convexity of expectation, if $\mathcal{A}$ achieves expected $C$-weighted external regret $B(T)$ uniformly for all $C \in \mathcal{C}$, it also achieves that $B(T)$ bound with respect to any distribution $C' \in \text{conv}(\mathcal{C})$. By Lemma \ref{lem:tv-regret-transfer},
    \begin{equation*}
        \R(T) \leq B(T) + 2d_{\mathrm{TV}}(\mathcal{D}^i, C^i) \leq B(T) + 2 \beta
    \end{equation*}
    Substituting this into the bound of Theorem \ref{thm:BIC-ext} gives the result. 
\end{proof}

Corollary \ref{lem:BIC-approx} is most informative when a large class of agent beliefs are close in TV to the convex hull of a smaller basis class we can analyze for incentive-compatibility directly. Designing such basis classes with better approximation properties (which retain efficient weighted regret guarantees) is a compelling direction for future work. 
\end{document}